\DeclareSIUnit{\dBm}{dBm}
\newcolumntype{x}{!{\vrule width 2px}}
\newcolumntype{y}{!{\vrule width 1.5px}}
\newacronym{ai}{AI}{artificial intelligence}
\newacronym{awgn}{AWGN}{additive white Gaussian noise}
\newacronym{bcd}{BCD}{block coordinate descent}
\newacronym{bs}{BS}{base station}
\newacronym{cp}{CP}{control plane}
\newacronym{crc}{CRC}{cyclic redundancy check}
\newacronym{csi}{CSI}{channel state information}
\newacronym{csit}{CSIT}{channel state information at transmitter}
\newacronym{dft-s-ofdm}{DFT-s-OFDM}{Discrete Fourier Transform-spread-OFDM}
\newacronym{fbl}{FBL}{finite blocklength}
\newacronym{gan}{GAN}{generative adversarial network}
\newacronym{ibl}{IBL}{infinite blocklength}
\newacronym{lfp}{LFP}{leakage-failure probability}
\newacronym{lut}{LUT}{look-up table}
\newacronym{mac}{MAC}{Medium Accesss Control}
\newacronym{mimo}{MIMO}{multi-input multi-output}
\newacronym{ml}{ML}{machine learning}
\newacronym{mm}{MM}{Minorize-Maximization}
\newacronym{noma}{NOMA}{non-orthogonal multi-access}
\newacronym{nom}{NOM}{non-orthogonal multiplexing}
\newacronym{ofdm}{OFDM}{orthogonal frequency-division multiplexing}
\newacronym{ofdma}{OFDMA}{orthogonal frequency-division multiple access}
\newacronym{oma}{OMA}{orthogonal multiple access}
\newacronym{papr}{PAPR}{Peak-to-Average Power Ratio}
\newacronym{per}{PER}{packet error rate}
\newacronym{phy}{PHY}{physical}
\newacronym{pld}{PLD}{physical layer deception}
\newacronym{pls}{PLS}{physical layer security}
\newacronym{prb}{PRB}{physical resource block}
\newacronym{sic}{SIC}{successive interference cancellation}
\newacronym{sinr}{SINR}{signal-to-interference-and-noise ratio}
\newacronym{snr}{SNR}{signal-to-noise ratio}
\newacronym{tdma}{TDMA}{time-division multiple access}
\newacronym{ue}{UE}{user equipment}
\newacronym{up}{UP}{user plane}
\newacronym{urllc}{URLLC}{ultra-reliable low-latency communication}
\newtheorem{theorem}{Theorem}
\newtheorem{lemma}{Lemma}
\newcommand{\removelatexerror} {\let\@latex@error\@gobble}
\newcommand{\superscript}[1]{^{\mathrm{#1}}}
\newcommand{\subscript}[1]{_{\mathrm{#1}}}
\newcommand{\revise}[2]{{\color{red}\sout{#1}}{\color{blue}#2}} 
\renewcommand{\revise}[2]{{\color{blue}#2}} 
\renewcommand{\revise}[2]{#2} 
\newcommand{\reviseRT}[2]{{\color{blue}#2}} %
\renewcommand{\reviseRT}[2]{#2} %
\newif\ifreviewmode
  \renewcommand{\todo}[1]{} 
  \renewcommand{\revise}[2]{#2} 
\newcommand\bob{\subscript{Bob}}
\newcommand\eve{\subscript{Eve}}
\newcommand\lf{\subscript{LF}}
\begin{document}

\title{Physical Layer Deception with Non-Orthogonal Multiplexing}

\author{
	 \author{Wenwen~Chen,
		Bin~Han, 
		Yao~Zhu,
		Anke~Schmeink, 
        Giuseppe~Caire, 
		and
		Hans~D.~Schotten
		\thanks{W. Chen, B. Han, and H. D. Schotten are with University of Kaiserslautern (RPTU), Germany. Y. Zhu and A. Schmeink are with RWTH Aachen University, Germany. G. Caire is with Technical University of Berlin, Germany. H. D. Schotten is with the German Research Center for Artificial Intelligence (DFKI), Germany. 
		B. Han (bin.han@rptu.de) and Y. Zhu (yao.zhu@inda.rwth-aachen.de) are the corresponding authors.
		}
	}
	
	\IEEEauthorblockN{
        Wenwen~Chen\IEEEauthorrefmark{1},
		Bin~Han\IEEEauthorrefmark{1},
		Yao~Zhu\IEEEauthorrefmark{2},
		Anke~Schmeink\IEEEauthorrefmark{2},	
Giuseppe~Caire\IEEEauthorrefmark{4}
and~Hans~D.~Schotten\IEEEauthorrefmark{1}\IEEEauthorrefmark{3}
	}
	
	\IEEEauthorblockA{
		\IEEEauthorrefmark{1}RPTU Kaiserslautern-Landau, \IEEEauthorrefmark{2}RWTH Aachen University, \IEEEauthorrefmark{3}German Research Center of Artificial Intelligence (DFKI),
        \IEEEauthorrefmark{4}Technical University of Berlin
	}
}

\bstctlcite{IEEEexample:BSTcontrol}

\maketitle

\begin{abstract}
\Ac{pls} is a promising technology to secure wireless communications by exploiting the physical properties of the wireless channel. However, the passive nature of \ac{pls} creates a significant imbalance between the effort required by eavesdroppers and legitimate users to secure data. To address this imbalance, in this article, we propose a novel framework of \ac{pld}, which combines \ac{pls} with deception technologies to actively counteract wiretapping attempts. Combining a two-stage encoder with randomized ciphering and non-orthogonal multiplexing, the \ac{pld} approach enables the wireless communication system to proactively counter eavesdroppers with deceptive messages. Relying solely on the superiority of the legitimate channel over the eavesdropping channel, the \ac{pld} framework can effectively protect the confidentiality of the transmitted messages, even against eavesdroppers who possess knowledge equivalent to that of the legitimate receiver. We prove the validity of the \ac{pld} framework with in-depth analyses and demonstrate its superiority over conventional \ac{pls} approaches with comprehensive numerical benchmarks.
\end{abstract}

\begin{IEEEkeywords}
Physical layer security, cyber deception, non-orthogonal multiplexing, finite blocklength codes.
\end{IEEEkeywords}

\IEEEpeerreviewmaketitle

\glsresetall

\section{Introduction}\label{sec:introduction}

\Ac{pls} is gaining prominence within wireless communication, aiming to secure transmissions by leveraging physical channel properties, thus offering a \revise{fresh layer}{new paradigm} of security without relying on traditional cryptographic methods. This technological trend is increasingly pivotal in modern wireless networks~\cite{HFA2019classification}.
However, future wireless networks are expected to support ultra-reliable and low-latency communications (URLLC)~\cite{SO2021urllc_intro}, where the transmitted packets usually consist of mission-critical information with small amount of bits, e.g., command signals for the actuator or real-time measurement from sensors. Due to the stringent delay requirement, only a limited number of blocklength is assigned to these packets, which indicates the classic assumption of infinite  blocklength no longer holds~\cite{PPV2010channel}. The short-packet transmissions operated in the so-called \ac{fbl} regime may still suffer from potential decoding error or potential leakage, even if \revise{Bob's channel is stronger than Eve's channel}{the legitimate user (\emph{Bob}) has a stronger channel than the eavesdropper (\emph{Eve})}. To characterize the \ac{pls} performance with \ac{fbl} codes, the authors in~\cite{YSP2019wiretap} provide the  bounds of achievable security rate under a given leakage probability and a given error probability, which offers a more general expression in the \ac{fbl} regime than the Wyner's secrecy capacity.


On the other hand, driven by the inherent interference along with the superimposed transmission,   
the non-orthogonal multiplexing technologies shows the potential to be the new opportunities for enhancing the \ac{pls} performance~\cite{LDQ2024PLSNOMA}. Leveraging the non-orthogonality of the signals, 
the  communication secrecy can be enhanced by constructively engineering interference in the wiretap channel. 
In fact, it has been proven from an information-theoretic perspective that transmitting open \revise{}{(public)} messages simultaneously with confidential messages can improve overall secrecy performance when using a security-oriented precoder design~\cite{XYW2022PLSpercoder}.

Despite the latest advances of \ac{pls} offering enhanced passive security, a notable imbalance still remains, as eavesdroppers \revise{}{(especially passive ones)} can attempt to wiretap with barely any risk of exposure\revise{}{~\cite{CWS+2018ghostbuster}}, and significantly lower effort compared to the extensive measures \revise{}{and costs~\cite{WBZ+2019survey,ZGA+2011throughput}} taken by the network and legitimate users to secure data. This imbalance necessitates a strategic pivot towards integrating active defense mechanisms, like deception technologies, into the wireless security framework. Deception technologies are designed to mislead and distract potential eavesdroppers by fabricating data or environments, thus protecting genuine data. These technologies can also entice eavesdroppers into revealing themselves, offering a proactive approach to maintaining security integrity~\cite{WL2018cyber}.

\revise{
	In 2023, we proposed a novel framework of \ac{pld}~\cite{HZS+2023nonorthogonal}, which pioneered to combine \ac{pls} and deception technologies. The \ac{pld} framework enables to deceive eavesdroppers and actively counteract their wiretapping attempts. Leveraging our insights of \ac{pls} in the \ac{fbl} regime, a joint optimization of the encryption coding rate and the power allocation was formulated and solved in~\cite{HZS+2023nonorthogonal}, to simultaneously achieve high secured reliability and effective deception. Nevertheless, as a preliminary effort, this work is still limited in several aspects. First, it lacks in-depth discussion about the error model of wiretap channels in the presence of deceptive ciphering. Second, the optimization problem is set up with a simple objective function that straightforwardly combines the secrecy performance and the deception performance, which lacks flexibility and adaptability to various practical scenarios. Third, the \ac{pld} framework is only discussed on a simple point-to-point communication scenario, without discussion about practical implementation and deployment in multi-access networks.
	}{
	While our previous work~\cite{HZS+2023nonorthogonal} pioneered the integration of \ac{pls} and deception technologies through the \ac{pld} framework, it left several critical aspects unexplored:
	\begin{enumerate}
		\item A comprehensive error model for wiretap channels incorporating deceptive ciphering was not developed.
		\item The optimization problem lacked flexibility in balancing secrecy and deception performance due to its simplistic objective function.
		\item It lacks of discussion about technical challenges of practical implementation and deployment, e.g., regarding the ciphering codec design, imperfect \ac{csi}, and multi-access network scenarios.
	\end{enumerate}
}

\revise{
	In this article, we extend our previous work and address the aforementioned limitations. With respect to \cite{HZS+2023nonorthogonal}, the main novel contributions of this article are summarized as follows:
	\begin{enumerate}
		\item We detail the system model with the distinguished transmission schemes with both activated and deactivated deceptive ciphering, and provide a comprehensive reception error model of the proposed approach in different scenarios.
		\item Instead of simply combining the reception rate and deception rate of both the legitimate user and the eavesdropper into a single objective function, we propose in this article to maximize the effective deception rate, while setting a constraint on the \ac{lfp}. This new setup allows to flexibly adjust the trade-off between the secrecy performance and the deception performance, and to adapt to various practical scenarios. The analyses to the feasibility and convexity of the optimization problem are correspondingly updated, and the numerical solver design also adapted therewith.
		\item Two different solutions of conventional \ac{pls}, instead of only one in \cite{HZS+2023nonorthogonal}, are provided as benchmarks to evaluate the performance of the \ac{pld} framework.
		\item We extend the discussion about the \ac{pld} framework to various aspects of implementation and deployment in practical use scenarios. 
	\end{enumerate}
	}{
		This article extends our previous work to address these limitations. The main novel contributions are as follows:
		\begin{enumerate}
			\item We present a detailed system model and a comprehensive reception error model for scenarios with both activated and deactivated deceptive ciphering.
			\item We propose a novel optimization approach that maximizes the effective deception rate while constraining the leakage-failure probability. This formulation enables flexible adaptation to various practical scenarios and allows for fine-tuning the trade-off between secrecy and deception performance.
			\item We provide extended discussions about the \ac{pld} framework's implementation and deployment considerations in practical scenarios.
		\end{enumerate}		
		Furthermore, we enhance the evaluation of our approach by benchmarking it against two distinct conventional \ac{pls} solutions, thereby providing a more rigorous assessment of the \ac{pld} framework's performance.
}

The remaining contents of this article are organized as follows. We begin with a brief review to related literature in Sec.~\ref{sec:related}, then setup the models and optimization problem in Sec.~\ref{sec:problem}. Afterwards, we present our analyses to the problem and our approach in Sec.~\ref{sec:approach}, which are later numerically validated and evaluated in Sec.~\ref{sec:evaluation}. Finally, we extend our discussion in Sec.~\ref{sec:discussion} regarding various perspectives of practical implementation and deployment of the \ac{pld} paradigm, before concluding this article with Sec.~\ref{sec:conclusion}.

\section{Related Works}\label{sec:related}
The concept of \ac{pls} originates from the seminal work of \emph{Wyner}~\cite{Wyner1975wiretap}, which generalizes \emph{Shannon}'s concept of perfect secrecy~\cite{Shannon1949communication} into a measurable strong secrecy over wiretap channels. Since then, the secrecy performance of communication systems has been studied over different variants of wiretap channels, including binary symmetric channels~\cite{CH1977note}, degraded \ac{awgn} channels~\cite{CH1978gaussian}, fading channels~\cite{GLE2008secrecy}, multi-antenna channels~\cite{ZGA2010secure}, broadcast channels~\cite{CK1978broadcast}, multi-access channels~\cite{TY2008gaussian}, interference channels~\cite{LSP+2009capacity}, relay channels~\cite{Oohama2007capacity}, etc. Thanks to the rich insights gained therefrom, optimization methods have been developed in various perspectives, such as radio resource allocation, beamforming and precoding, antenna/node selection and cooperation, channel coding, etc.~\cite{MFH+2014principles,WBZ+2019survey}, to enhance the secrecy performance of communication systems.

To investigate the transmission performance in \ac{fbl} regime, authors in the landmark work~\cite{PPV2010channel} derive a tight bound with the closed-form expression for the decoding error probability. This expression and its first-order approximation are widely adopted to investigate \ac{fbl} performance, especially with \ac{urllc} applications~\cite{LZL2023eMBBURLLC,LZW2023URLLC+uplink,LLY2023URLLC+OTFS}.  
Following this effort, authors in~\cite{YSP2019FBLPLS} derive the achievable security rate and its tight bounds for both discrete memoryless and Gaussian wiretap channels.  Based on that, abounding works have been done to enhance the \ac{pls} performance in \ac{fbl} regime. 
For example, the authors in~\cite{WLZ2022PLS+covertness} investigate the covertness by keeping the confidential signal below a certain signal-to-noise ratio threshold for the wiretap channel so that Eve can not detect the transmission. In~\cite{OPC2023PLS+reliable}, the interplay between reliability and security is studied, where the joint secure-reliability performance is enhanced by allocating the resource of the transmissions. This interplay is further investigated in~\cite{ZYH+2023trade}, where the concept of trading reliability for security is proposed to characterize the trade-off between security and reliability in \ac{pls} for the short-packet transmissions. 

Another emerging cluster of research focuses on the application of \ac{noma} in \ac{pls}. \ac{noma} is a promising technology that allows multiple users to share the same frequency and time resources, which can significantly increase spectral efficiency. Especially for \ac{pls}, the interference caused by the superposition signals could be beneficial to improve the security~\cite{CWD+2021improving,XYP+2019physical}. Therefore, \ac{noma}-based \ac{pls} has been shown to provide enhanced security compared to conventional approaches. Nevertheless, such studies are also generally considering long codes, leaving \ac{noma}-\ac{pls} in the \ac{fbl} regime a virgin land of research.



In the field of information security, the principles of deception were firstly introduced and well demonstrated by the infamous practices of social engineering by \emph{Mitnick}~\cite{MS2003art}. Later, this concept was transferred by \emph{Cheswick}~\cite{Cheswick1992evening} and \emph{Stoll}~\cite{Stoll2005cuckoo} into defensive applications, which were originally called \emph{honeypots} and thereafter generalized to a broader spectrum of \emph{deception technologies}~\cite{FAL+2018demystifying}. \revise{}{The core principle of deception technologies is misleading and distracting potential attackers with fake targets, e.g. fabricated data with similar features like the confidential data, and therewith protecting genuine information. These technologies can also entice attackers into revealing themselves, offering a proactive approach to maintaining security integrity.} Over the past three decades, deception technologies have been well developed and widely adopted in information systems, across the four layers of network, system, application, and data. Various solutions have been proposed to mitigate, prevent, or to detect cyber attacks. For a comprehensive review on the state-of-the-art of deception technologies, readers are referred to~\cite{FAL+2018demystifying,HKB2018deception,PCZ2019game}. On the physical layer of wireless systems, however, deception technologies are still in their infancy. Besides our preliminary work~\cite{HZS+2023nonorthogonal} that has been introduced in Sec.~\ref{sec:introduction}, the inspiration of exploiting physical characteristics of wireless channels to proactively deceiving eavesdroppers with fake information is also seen in the works of \cite{HFW+2023proactive} and \cite{QMZ+2024adversarial}. More specifically, the former work leverages the spatial diversity of \ac{mimo} systems to attract an eavesdropper to gradually approach a \emph{trap region} where fake messages are received, while the latter designs a \ac{gan} to generate specialized waveform that paralyzes the eavesdropper's recognition model.

\section{Problem Setup}\label{sec:problem}
\subsection{System Model}\label{subsec:sys_model}
We consider a typical wireless eavesdropping scenario. The information source \emph{Alice} sends messages to the sink \emph{Bob} over its legitimate channel $h\bob$ with gain $z\bob=\left\vert h\bob\right\vert^2$. Meanwhile, an attacker \emph{Eve} listens to \emph{Alice} over the eavesdropping channel $h\eve$ with gain $z\eve=\left\vert h\eve\right\vert^2$. Though \emph{Eve} may \revise{}{theoretically} occur at any position, \revise{practically, due to the concern of exposure, \revise{she}{\emph{Eve}} is}{it is in practice} unlikely staying consistently close to \emph{Alice} or \emph{Bob}. \revise{}{To understand this difficulty for \emph{Eve}, we can investigate the downlink scenario for instance, where \emph{Alice} is a fixed \ac{bs} and \emph{Bob} a mobile \ac{ue}. On the one hand, it is a common practice in wireless networks to deploy \emph{secrecy guard zones} in which \emph{Alice} can detect the existence of \emph{Eve} in its vicinity~\cite{ZGA+2011throughput}, or to realize \emph{secrecy protected zones} that either inherently or intentionally the existence of \emph{Eve} around \emph{Alice}~\cite{CCL+2014enhanced}, e.g. by installing the antenna at a physically secured location. On the other hand, \emph{Bob} is generally considered moving randomly across various wireless security models~\cite{BWB+2019modeling}. Due to the anonymity provided by modern wireless networks, it is often challenging for \emph{Eve} to precisely localize \emph{Bob} and to stay close thereto. For the uplink, the roles of \emph{Alice} and \emph{Bob} are exchanged, so the secrecy guard/protected zones apply to \emph{Bob} and the random mobility apply to \emph{Alice}, which also prevents \emph{Eve} from consistently staying close to each of them. Discussions about the links between two fixed nodes (e.g., wireless backhaul) or two mobile devices (e.g., vehicle-to-vehicle communication) are also trivial.} In this context, with proper beamforming, \emph{Alice} is capable of keeping $h\bob$ statistically superior to $h\eve$, which allows to apply \ac{pls} approaches. To enable physical layer deception, \emph{Alice} deploys a two-stage encoder followed by \ac{nom}-based waveforming, as illustrated in Fig.~\ref{fig:alice_model}.

\begin{figure}[!htpb]
	\centering
	\begin{subfigure}[t]{\linewidth}
		\centering
		\frame{\includegraphics[width=\linewidth]{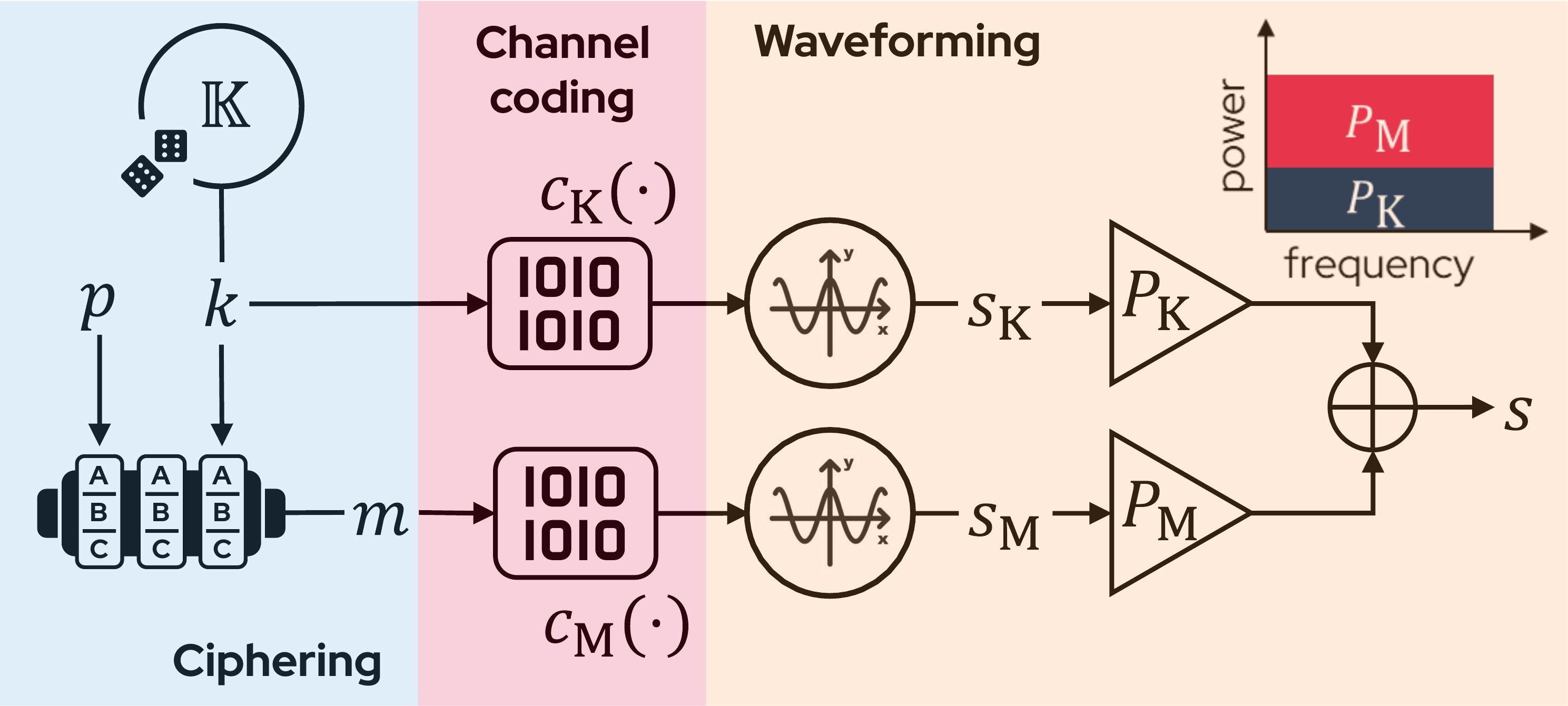}}
		\subcaption{}
		\label{subfig:alice_model_deception_active}
	\end{subfigure}\\
	\begin{subfigure}[t]{\linewidth}
		\centering
		\frame{\includegraphics[width=\linewidth]{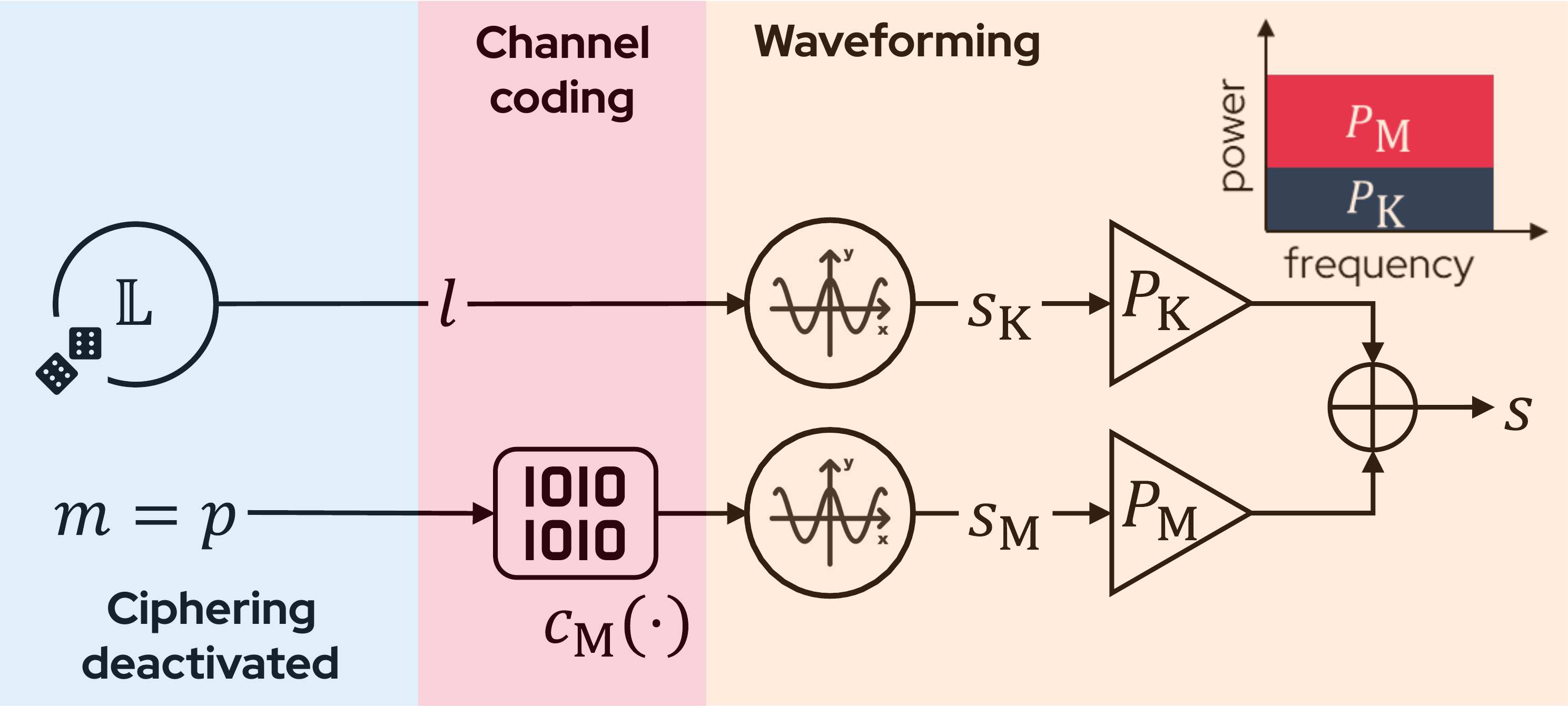}}
		\subcaption{}
		\label{subfig:alice_model_deception_inactive}
	\end{subfigure}
	\caption{The transmitting scheme of \emph{Alice}, with deceptive ciphering \subref{subfig:alice_model_deception_active} activated and \subref{subfig:alice_model_deception_inactive} deactivated, respectively.}
	\label{fig:alice_model}
\end{figure}

The first stage of the encoder is a symmetric key cipherer that can be optionally activated/deactivated. When activated, as illustrated in Fig.~\ref{subfig:alice_model_deception_active}, it encrypts every $d\subscript{P}$-bit plaintext message $p$ with an individually and randomly selected $d\subscript{K}$-bit key $k$ into a $d\subscript{M}$-bit ciphertext 
\begin{equation}\label{eq:ciphering}
	m=f(p,k)\in\mathbb{M},\quad\forall (p,k)\in\left(\mathbb{P}\times\mathbb{K}\right),
\end{equation}
where $\mathbb{M}\subseteq \{0,1\}^{d\subscript{M}}$, $\mathbb{P}\subseteq \{0,1\}^{d\subscript{P}}$, and $\mathbb{K}\subseteq \{0,1\}^{d\subscript{K}}$ are the feasible sets of ciphertext codes, plaintext codes, and keys, respectively.
On the other hand, given the chosen key $k$, the plaintext can be decrypted from the ciphertext through
\begin{equation}
	p=f^{-1}(m,k).
\end{equation}
Especially, the codebooks shall be designed to ensure that
\begin{equation}\label{eq:undetectable_ciphering}
	\mathbb{M}\subseteq\mathbb{P}
\end{equation}
\revise{}{which implies $d\subscript{P}=d\subscript{M}$,} and that $\forall\left(m,k,k'\right)\in\left(\mathbb{M}\times\mathbb{K}^2\right)$ it holds
\begin{equation}\label{eq:unreplacable_key}
	f^{-1}(m,k')\left\vert_{k'\neq k}\neq f^{-1}(m,k).\right.
\end{equation}
The second stage of the encoder is a pair of channel coders, $c\subscript{M}$ and $c\subscript{K}$, which attach error correction redundancies to the ciphertext $m$ and the key $k$, respectively. The two output codewords, both $n$ bits long, are then individually modulated before non-orthogonally multiplexed in the power domain:
\begin{equation}\label{eq:multiplexing}
	s=\frac{P\subscript{M}s\subscript{M}+P\subscript{K}s\subscript{K}}{P\subscript{M}+P\subscript{K}},
\end{equation}
where $s$ is the power-normalized baseband signal to transmit, $s\subscript{M}$ and $s\subscript{K}$ the power-normalized baseband signals carrying the ciphertext and the key, respectively. $P\subscript{M}$, $P\subscript{K}$, and $P\subscript{M}+P\subscript{K}$ are the transmission powers allocated to $s\subscript{M}$, $s\subscript{K}$, and $s$, respectively. Particularly, we set $s\subscript{M}$ as the primary component of the message, and $s\subscript{K}$ the secondary, so that $P\subscript{M}>P\subscript{K}$. On the receiver side, for both $i\in\{\text{Bob}, \text{Eve}\}$:
\begin{equation}
	r_i=s_i*h_i+w_i,
\end{equation}
where $r_i$ and $w_i$ are the baseband signal and equivalent baseband noise received at $i$, respectively.

On the other hand, when the cipherer is deactivated, the transmitter functions as shown in Fig.~\ref{subfig:alice_model_deception_inactive}. The plaintext $p$ is directly inherited as the ciphertext, i.e., $m=p$. 
\reviseRT{Meanwhile, to prevent revealing the status of cipherer through the power profile of $s$, the key component $s\subscript{K}$ is still generated, but not carrying any valid ciphering key $k\in\mathbb{K}$. Instead, a ``litter'' sequence $l\in\mathbb{L}$ is randomly selected to derive $s\subscript{K}$ in this case.}{Remark that though no valid ciphering key is generated in this case, it is risky to simply transmit $s=s\subscript{M}$ without any masking. This is because the power profile of $s$ may reveal the status of cipherer, which can be exploited by \emph{Eve} to infer the plaintext when it estimates the ciphering is deactivated. To prevent this, we propose to derive $s\subscript{K}$ with a randomly selected ``litter'' sequence $l\in\mathbb{L}$ in this case.} Particularly, the set of litter codes $\mathbb{L}\subseteq\{0,1\}^n$ shall fulfill
\begin{equation}
	\not\exists \{k,l\}\in\mathbb{K}\times\mathbb{L}: D\subscript{Hamm}(c\subscript{K}(k),l)\leqslant D\subscript{max},
\end{equation}
where $D\subscript{Hamm}(x,y)$ is the Hamming distance between $x$ and $y$, and $D\subscript{max}$ is the maximal distance of a received codeword from the codebook for the channel decoder $c^{-1}\subscript{K}$ to correct errors.  The waveforming stage remains the same like with the deceptive cipherer activated.

When receiving a message $r_i$, the receiver $i$ is supposed to first decode the primary codeword $\hat{m}$ therefrom. Subsequently, it carries out \ac{sic} and try to decode a key $\hat{k}$ from the remainder, which consists of both the secondary component and the noise. \revise{}{We consider both the decoders for the ciphertext and for the key to have bounded distance, i.e., they will reject all estimates and report an erasure if the Hamming distance between the received bits and the closest codeword exceeds a certain threshold~\cite{Forney1968exponential}. Practically, bounded-distance decoders are often implemented by cascading an unbounded-distance decoder with a \ac{crc}.} If the receiver \revise{fails to decode any valid $\hat{k}$}{captures an erasure} on the second step, it perceives the situation of deactivated ciphering and takes $\hat{m}$ as the plaintext. Otherwise, it uses the decoded $\hat{k}$ to decipher $\hat{m}$ for recovery of the plaintext $\hat{p}$.

Challenging the worst case where the eavesdropper has maximum knowledge of the security measures, in this work we assume that the \revise{sets $\mathbb{P}$, $\mathbb{M}$, $\mathbb{K}$, $f$, $f^{-1}$}{tuple $\left(\mathbb{P}, \mathbb{M}, \mathbb{K}, f, f^{-1}\right)$}, as well as the modulation and channel coding schemes, are all \emph{common knowledge} shared among \emph{Alice}, \emph{Bob}, and \emph{Eve}. In this case, both \emph{Bob} and \emph{Eve} are capable of attempting with \ac{sic} to sequentially decode $m$ and $k$ from received signals. We assume that both \emph{Bob} and \emph{Eve} have perfect knowledge of their own channels, so that ideal channel equalization is achieved by both. \revise{}{Note that this setup:
\begin{enumerate}
	\item Reflects a special kind of practical security risk, such like compromised database confidentiality, or a malicious insider, where the eavesdropper has full access to the same knowledge about system as the legitimate receiver.
	\item Excludes computational security aspects, which are not the focus of our work, leaving the security performance of the system solely dependent on the physical layer security measures.
	\item Outlines a worst-case scenario with ``ideal'' eavesdropper, which provides a lower bound of the system's security performance.
\end{enumerate}
}
\subsection{Error Model}\label{subsec:error_model}
With the deceptive ciphering activated, for both the ciphertext $m$ and the key $k$, there can be three different results of the decoding:
\begin{enumerate}
	\item \emph{Success}: when the bit errors are within the error correction capability of the channel decoder, the data is correctly obtained.
	\item \emph{\revise{Failure}{Erasure}}: when the bit errors exceed the receiver's error correction capability, but \revise{not its error detection capability}{remain within the \ac{crc}'s detection capacity}, the receiver will \revise{be aware and report a failure}{report an erasure}.
	\item \emph{\revise{Confusion}{Error}}: if the bit errors exceed the \revise{error}{\ac{crc}'s} detection capability, the receiver will mistakenly decode the data with a wrong one, leading to \revise{a confusion}{an undetected error~\cite{Forney1968exponential}}.
\end{enumerate}
Generally, upon the combination of the decoding results of $m$ and $k$, there can be three different outcomes of the plaintext recovery, as shown in \revise{Tab.}{Table}~\ref{subtab:err_model_comprehensive}:
\begin{enumerate}
	\item \emph{Perception}: if both $m$ and $k$ are successfully decoded, the plaintext $p$ is correctly perceived.
	\item \emph{Deception}: between $m$ and $k$, in case only one is successfully decoded and the other \revise{confused}{incorrectly}, or when both of them are confused with \revise{wrong data}{undetected errors}, the receiver, not aware of the \revise{confusion}{error}, will try to recover $p$ with an incorrect $(m,k)$ pair, and thus obtain a wrong plaintext.
	\item \emph{Loss}: when either $m$ or $k$ \revise{fails to be decoded}{is erased}, the receiver is unable to obtain a valid $(m,k)$ pair to decipher with, so that the plaintext $p$ is lost. Indeed, since $m$ is always first decoded as the primary component, a failure in its decoding will automatically terminate the \ac{sic} process, and thus the decoding of $k$ will also fail.
\end{enumerate}
\revise{}{Note that the cases of {deception} and {loss} can be understood from a semantic perspective as {error} and {erasure} over the semantic channel, respectively~\cite{HZS+2024semantic}.	} In practical deployment, when \emph{Alice} is appropriately specified to encode both $m$ and $k$ with sufficient redundancies, and transmit them both with sufficient power, confusion is unlikely to happen. Thus, the error model simplifies to \revise{Tab.}{Table}~\ref{subtab:err_model_static}, and deception will be eliminated from the system.

However, if the cipherer is \emph{randomly} activated on selected messages (e.g., the most confidential ones), the deception can be reintroduced back to the system. More specifically, it is involved with the case where the receiver successfully decodes $m$, but cannot obtain any valid $k$ from the remainder signal. In this situation, the receiver, unaware of the cipherer activation status, cannot distinguish if it is due to a transmission error, or if the cipherer is deactivated (so that no key is transmitted at all but only a litter sequence). Thus, when mistaking the former case for the latter, the receiver will take the ciphertext as an unciphered plaintext, and therefore undergo a deception. In this context, the error model is shown in \revise{Tab.}{Table}~\ref{subtab:err_model_dynamic}.

\begin{table}[!htbp]
	\centering
	\caption{Reception error models of the proposed approach, with \subref{subtab:err_model_comprehensive} generic conditions, \subref{subtab:err_model_static} sufficient redundancy, and \subref{subtab:err_model_dynamic} random cipherer activation, respectively.}
	\label{tab:merged_table}
	\begin{subtable}[t]{\linewidth}
		\centering
		\begin{tabular}{ c  c   c   c  c }
			\multicolumn{2}{c}{}	&	\multicolumn{3}{c}{\textbf{Ciphertext}}\\
			\multicolumn{2}{c}{}	&	\textit{Success}	&	\textit{\revise{Confusion}{Error}}	&	\textit{\revise{Failure}{Erasure}}	\\\hhline{*{2}~*{3}-}
			&\multicolumn{1}{r|}{\textit{Success}}	&	\multicolumn{1}{c|}{\cellcolor[gray]{0.9}Perception}	& \multicolumn{1}{c|}{\cellcolor[gray]{0.5}} 	& \multicolumn{1}{c|}{\cellcolor[gray]{0.7}}\\\hhline{*{2}~|->{\arrayrulecolor[gray]{0.5}}->{\arrayrulecolor{black}}|>{\arrayrulecolor[gray]{0.7}}->{\arrayrulecolor{black}}|}
			&\multicolumn{1}{r|}{\textit{\revise{Confusion}{Error}}}	& \multicolumn{1}{c}{\cellcolor[gray]{0.5}}	& \multicolumn{1}{c|}{\cellcolor[gray]{0.5}Deception}	& \multicolumn{1}{c|}{\cellcolor[gray]{0.7}}	\\\hhline{*{2}~|*{2}-|>{\arrayrulecolor[gray]{0.7}}->{\arrayrulecolor{black}}|}
			\multirow{-3}{*}{\rotatebox{90}{\textbf{Key}}}	&\multicolumn{1}{r|}{\textit{\revise{Failure}{Erasure}}}	& \multicolumn{2}{c}{\cellcolor[gray]{0.7}}	 & \multicolumn{1}{c|}{\cellcolor[gray]{0.7}Loss}	\\\hhline{*{2}~*{3}-}
		\end{tabular}
		\subcaption{}
		\label{subtab:err_model_comprehensive}
	\end{subtable}\\
	\begin{subtable}[t]{\linewidth}
		\centering
		\begin{tabular}{ c  c  c  c |}
			\multicolumn{2}{c}{}	&	\multicolumn{2}{c}{\textbf{Ciphertext}}\\
			\multicolumn{2}{c}{}	&	\textit{Success}	&	\multicolumn{1}{c}{\textit{\revise{Failure}{Erasure}}}	\\\hhline{*{2}~*{2}-}
			&\multicolumn{1}{r|}{\textit{Success}}	&	\multicolumn{1}{c|}{\cellcolor[gray]{0.9}Perception}	& {\cellcolor[gray]{0.7}} 	\\\hhline{*{2}~|->{\arrayrulecolor[gray]{0.7}}->{\arrayrulecolor{black}}|}
			\multirow{-2}{*}{\rotatebox{90}{\textbf{Key}}} &\multicolumn{1}{r|}{\textit{\revise{Failure}{Erasure}}}	& \multicolumn{1}{c}{\cellcolor[gray]{0.7}}	&  Loss{\cellcolor[gray]{0.7}}	\\\hhline{*{2}~|*{2}-|}
		\end{tabular}
		\subcaption{}
		\label{subtab:err_model_static}
	\end{subtable}\\
	\begin{subtable}[t]{\linewidth}
		\centering
		\begin{tabular}{ c  c  c  c |}
			\multicolumn{2}{c}{}	&	\multicolumn{2}{c}{\textbf{Ciphertext}}\\
			\multicolumn{2}{c}{}	&	\textit{Success}	&	\multicolumn{1}{c}{\textit{\revise{Failure}{Erasure}}}	\\\hhline{*{2}~*{2}-}
			&\multicolumn{1}{r|}{\textit{Success}}	&	\multicolumn{1}{c|}{\cellcolor[gray]{0.9}Perception}	& {\cellcolor[gray]{0.7}} 	\\\hhline{*{2}~|->{\arrayrulecolor[gray]{0.7}}->{\arrayrulecolor{black}}|}
			\multirow{-2}{*}{\rotatebox{90}{\textbf{Key}}} &\multicolumn{1}{r|}{\textit{\revise{Failure}{Erasure}}}	& \multicolumn{1}{c|}{\cellcolor[gray]{0.5}Deception}	& \multirow{-2}{*}{Loss}{\cellcolor[gray]{0.7}}	\\\hhline{*{2}~|*{2}-|}
		\end{tabular}
		\subcaption{}
		\label{subtab:err_model_dynamic}
	\end{subtable}
\end{table}

From the system-level perspective, the cases of deception and loss distinguish from each other regarding their utility impacts. If an incorrectly decoded message creates the same utility for the receiver as a lost message does (a zero-utility is usually considered in this case), deception will be practically equivalent to loss. Nevertheless, in many scenarios, it is possible to design the system so that a deception will lead to a significant penalty, which can be exploited to actively counter eavesdroppers. This has been studied in~\cite{HZS+2023nonorthogonal}, and we will further elaborate on the use cases in Sec.~\ref{subsec:use_cases}.

\subsection{Performance Metrices}
Conventional \ac{pls} approaches, mostly working in the \ac{ibl} regime, commonly refer to the secrecy capacity to evaluate the security performance. However, in the \ac{fbl} regime, the classical concept of channel capacity is invalid since error-free transmission is hardly achievable~\cite{YSP+2019wiretap}.
As an alternative performance indicator for secure and reliable communication, we introduce the \ac{lfp}, i.e. the probability that the plaintext is either correctly perceived by \emph{Eve}, or not perceived by \emph{Bob}:
\begin{equation}\label{eq:lfp_definition}
    \varepsilon\lf=1-(1-\varepsilon\bob)\varepsilon\eve,
\end{equation}
where $\varepsilon\bob$ and $\varepsilon\eve$ are the non-perception probabilities of \emph{Bob} and \emph{Eve}, respectively. Notating as $\varepsilon_{i,j}$ the failure probability of receiver $i\in\{\text{Bob}, \text{Eve}\}$ at decoding the message component $j\in\{\text{M}, \text{K}\}$, we have
\begin{equation}
\label{Eq: overall error rate}
    \varepsilon_i=1-(1-\varepsilon_{i,\mathrm{M}})(1-\varepsilon_{i,\mathrm{K}}),\quad\forall i \in \{\text{Bob}, \text{Eve}\},
\end{equation}
and therefore
\begin{equation}
\begin{split}
    \varepsilon\lf=&1-\left (1-\varepsilon\subscript{Bob,M} \right) \left (1-\varepsilon\subscript{Bob,K}\right)\\
	&\times\left[ 1-\left (1-\varepsilon\subscript{Eve,M} \right) \left (1-\varepsilon\subscript{Eve,K} \right)  \right]
\end{split}
\end{equation}

Additionally, to evaluate the performance of deceiving eavesdroppers, we define the effective deception rate as the probability that not \emph{Bob} but only \emph{Eve} is deceived:
\begin{equation}\label{eq:eff_deception_rate}
	R\subscript{d}=\left[1-\left(1-\varepsilon\subscript{Bob,M}\right)\varepsilon\subscript{Bob,K}\right](1-\varepsilon\subscript{Eve,M})\varepsilon\subscript{Eve,K}
\end{equation}

According to~\cite{PPV2010channel}, the error probability $\varepsilon_{i,j}$ with a given packet size $d_j$ can be written as:
\begin{equation}
\label{eq:FBL_error}
    \begin{aligned}
    \varepsilon_{i,j} &=\mathcal{P}\left(\gamma_{i,j},d_j,n\right) \\&\approx \revise{\mathcal{Q}}{Q} \left(\sqrt{\frac{n}{V(\gamma_{i,j})}}\left(\mathcal{C}(\gamma_{i,j})-\frac{d_j}{n}\right) \ln{2}\right),
    \end{aligned}
\end{equation}
where  $\revise{\mathcal{Q}}{Q}(x)=\frac{1}{\sqrt{2 \pi}} \int_{x}^{\infty} e^{-t^2/2} dt$ is the Q-function in statistic, ,$\mathcal{C}(\gamma)=\log_2(1+\gamma)$ is the Shannon capacity, ${V}(\gamma)=1-\frac{1}{{{\left( 1+{{\gamma }} \right)}^{2}}}$ is the channel dispersion. 

\subsection{Strategy Optimization}
Towards a novel reliable and secure communication solution that well counters eavesdroppers, we aim for maximizing the effective deception rate while maintaining a low \ac{lfp}. In practical wireless systems, the blocklength of codewords $n$ shall be fixed to fit the radio numerology, while the plaintext length $d\subscript{P}$ is determined by the application requirements, leaving us only three degrees of freedom to optimize: the key length $d\subscript{K}$, and the powers $\left(P\subscript{M}, P\subscript{K}\right)$ that are allocated to the message and the key components, respectively. Thus, the original optimization problem \reviseRT{$(OP)$}{} can be formulated as follows:
\begin{maxi!}
	{d\subscript{K}, P\subscript{M}, P\subscript{K} }{R\subscript{d}\label{obj:deception_rate}} {\label{op:origin}}{\reviseRT{(OP):}{}}{}
	\addConstraint{P\subscript{M}\geqslant 0 }
	\addConstraint{P\subscript{K}\geqslant 0 \label{con:non-neg_power_key}}
	\addConstraint{P\subscript{M}+P\subscript{K}\leqslant P_\Sigma \label{con:max_power}}
	\addConstraint{d\subscript{K}\in\{0,1,\dots n\} \label{con:max_key_length}}
	\addConstraint{\varepsilon\subscript{Bob,M}\leqslant \varepsilon\superscript{th}\subscript{Bob,M} \label{con:err_bob_message_threshold}}
	\addConstraint{\varepsilon\subscript{Eve,M}\leqslant \varepsilon\superscript{th}\subscript{Eve,M} \label{con:err_eve_message_threshold}}
	\addConstraint{\varepsilon\subscript{Bob,K}\leqslant \varepsilon\superscript{th}\subscript{Bob,K} \label{con:err_bob_key_threshold}}
	\addConstraint{\varepsilon\subscript{Eve,K}\geqslant \varepsilon\superscript{th}\subscript{Eve,K} \label{con:err_eve_key_threshold}}
    \addConstraint{\varepsilon\lf\leqslant \varepsilon\superscript{th}\lf, \label{con:LFP_threshold}}
\end{maxi!}
where $\varepsilon\superscript{th}\subscript{Bob,M}$, $\varepsilon\superscript{th}\subscript{Eve,M}$, $\varepsilon\superscript{th}\subscript{Bob,K}$, $\varepsilon\superscript{th}\subscript{Eve,K}$, and $\varepsilon\superscript{th}\subscript{LF}$ are pre-determined thresholds. \revise{}{Moreover, $P_\Sigma$ is the total power threshold}.

\section{Proposed Approach}\label{sec:approach}
\subsection{Analyses}\label{subsec:analyses}
Due to the non-convexity of the deception rate $R\subscript{d}$, Problem~\eqref{op:origin} is challenging to solve, and thus, in this section, we first reformulate the original problem to an equivalent, yet simpler one. With our analytical findings, we establish the partial convexity feature of the objective function with respect to each optimization variable. 

In particular, we first relax $d\subscript{K}$ from integer to real value, i.e., $d\subscript{K}\in \mathbb{N} \to d\subscript{K}\in \mathbb{R}_+$. Then, we establish the following theorem to characterize the optimal condition of Problem~\eqref{op:origin}:

\begin{theorem}
\label{theorem:max_power}
    Given any $d\subscript{K}\geqslant 0$, the optimal power allocation must fulfill $P\superscript{o}\subscript{M}+P\superscript{o}\subscript{K}=P_\Sigma$.
\end{theorem}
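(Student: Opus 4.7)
The plan is to argue by contradiction. Suppose an optimal solution $(P\superscript{o}\subscript{M}, P\superscript{o}\subscript{K})$ satisfies $P\superscript{o}\subscript{M} + P\superscript{o}\subscript{K} < P_\Sigma$ strictly, and let $\Delta = P_\Sigma - P\superscript{o}\subscript{M} - P\superscript{o}\subscript{K} > 0$ denote the unused power. I will construct a strictly better feasible candidate by channelling all of $\Delta$ into the primary component, setting $(P'\subscript{M}, P'\subscript{K}) = (P\superscript{o}\subscript{M} + \Delta,\, P\superscript{o}\subscript{K})$, and then verify that this candidate (i) is feasible and (ii) strictly improves $R\subscript{d}$. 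Dumping the slack exclusively into $P\subscript{M}$ is deliberate: the post-SIC key SINR $\gamma_{i,\mathrm{K}} = P\subscript{K} z_i / \sigma^2$ depends only on $P\subscript{K}$, so the key-component error probabilities $\varepsilon_{i,\mathrm{K}}$ are invariant under the perturbation.

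For the feasibility check, since $\varepsilon\subscript{Bob,K}$ and $\varepsilon\subscript{Eve,K}$ are unchanged, constraint~\eqref{con:err_bob_key_threshold} and the lower-bound constraint~\eqref{con:err_eve_key_threshold} --- which would otherwise be the critical obstruction to raising the total power --- carry over without effort. Meanwhile the message SINR $\gamma_{i,\mathrm{M}} = P\subscript{M} z_i / (P\subscript{K} z_i + \sigma^2)$ is strictly increasing in $P\subscript{M}$ for both $i\in\{\text{Bob},\text{Eve}\}$, so $\varepsilon\subscript{Bob,M}$ and $\varepsilon\subscript{Eve,M}$ strictly decrease through the $Q$-function in~\eqref{eq:FBL_error}, which preserves the upper-bound constraints~\eqref{con:err_bob_message_threshold} and~\eqref{con:err_eve_message_threshold}. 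For~\eqref{con:LFP_threshold}, differentiating $\varepsilon\lf = 1 - (1-\varepsilon\bob)\varepsilon\eve$ along the perturbation and using the gain-ordering $z\bob > z\eve$ together with the product structure $\varepsilon_i = 1 - (1-\varepsilon_{i,\mathrm{M}})(1-\varepsilon_{i,\mathrm{K}})$ lets me bound its sign and argue that the LFP budget is not exceeded.

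The principal obstacle is establishing strict improvement of $R\subscript{d}$. With $\varepsilon_{i,\mathrm{K}}$ frozen, the first factor $[1 - (1-\varepsilon\subscript{Bob,M})\varepsilon\subscript{Bob,K}]$ \emph{shrinks} as $\varepsilon\subscript{Bob,M}$ decreases --- Bob becomes more deceivable --- while the second factor $(1-\varepsilon\subscript{Eve,M})\varepsilon\subscript{Eve,K}$ \emph{grows} as $\varepsilon\subscript{Eve,M}$ decreases. I plan to resolve this tension by appealing to~\eqref{eq:FBL_error} and the ordering $z\bob > z\eve$: because Bob's message decoder already operates at the higher SINR, $\varepsilon\subscript{Bob,M}$ sits on a flatter portion of the $Q$-function and is less sensitive to a common increment in $P\subscript{M}$ than $\varepsilon\subscript{Eve,M}$; the Eve-side gain, further amplified by the large $\varepsilon\subscript{Eve,K}$ kept above its threshold by~\eqref{con:err_eve_key_threshold}, then dominates and yields $\partial R\subscript{d}/\partial P\subscript{M} > 0$ along the perturbation path. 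Should a pathological regime cause~\eqref{con:LFP_threshold} to tighten during the perturbation, the fallback is a two-variable correction that also nudges $P\subscript{K}$ downward by a vanishingly small amount, reclaiming LFP slack without violating~\eqref{con:err_bob_key_threshold}; either way a contradicting candidate is produced and the claim follows.
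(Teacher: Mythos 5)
Your contradiction skeleton --- assume a residual $P_\Delta>0$, pour all of it into $P\subscript{M}$ while freezing $P\subscript{K}$ (hence freezing $\varepsilon_{i,\mathrm{K}}$), and show $R\subscript{d}$ strictly improves --- is exactly the paper's construction, and you correctly identify the tension at the heart of the argument: lowering $\varepsilon\subscript{Bob,M}$ shrinks the factor $1-(1-\varepsilon\subscript{Bob,M})\varepsilon\subscript{Bob,K}$ while lowering $\varepsilon\subscript{Eve,M}$ grows the factor $(1-\varepsilon\subscript{Eve,M})\varepsilon\subscript{Eve,K}$. The gap is in how you resolve that tension. You rest the entire improvement step on the unproven sensitivity inequality $\left\vert\partial\varepsilon\subscript{Eve,M}/\partial P\subscript{M}\right\vert\geqslant\left\vert\partial\varepsilon\subscript{Bob,M}/\partial P\subscript{M}\right\vert$, justified only by ``Bob sits on a flatter portion of the $Q$-function.'' That inequality does not follow from $z\bob>z\eve$ alone: by the chain rule the magnitude is $\tfrac{1}{\sqrt{2\pi}}e^{-\omega_{i,\mathrm{M}}^2/2}\,\tfrac{\partial\omega_{i,\mathrm{M}}}{\partial\gamma_{i,\mathrm{M}}}\,\tfrac{\partial\gamma_{i,\mathrm{M}}}{\partial P\subscript{M}}$, and while the Gaussian factor favours Eve, the factor $\partial\gamma_{i,\mathrm{M}}/\partial P\subscript{M}=z_i/(z_iP\subscript{K}+\sigma^2)$ is \emph{larger} for Bob, so near the part of the feasible region where $\omega\subscript{Bob,M}$ and $\omega\subscript{Eve,M}$ are both small the comparison can go the wrong way. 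The paper closes this step by a different mechanism that you never invoke: the key-error constraints \eqref{con:err_bob_key_threshold} and \eqref{con:err_eve_key_threshold} pin the \emph{frozen} key-error probabilities at $\varepsilon\subscript{Bob,K}\leqslant\varepsilon\superscript{th}\subscript{Bob,K}<\tfrac12$ and $\varepsilon\subscript{Eve,K}\geqslant\varepsilon\superscript{th}\subscript{Eve,K}>\tfrac12$, and it is these bounds that sign the two competing terms in the finite difference $R\subscript{d}\left(P\superscript{f}\subscript{M},P\superscript{o}\subscript{K}\right)-R\subscript{d}\left(P\superscript{o}\subscript{M},P\superscript{o}\subscript{K}\right)$. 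Without either a proof of your sensitivity inequality or an argument exploiting those threshold constraints, the central step does not go through.

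A secondary issue: your fallback for the leakage-failure constraint \eqref{con:LFP_threshold} --- nudging $P\subscript{K}$ downward --- is unjustified as stated, since reducing $P\subscript{K}$ raises $\varepsilon\subscript{Bob,K}$ and can therefore break \eqref{con:err_bob_key_threshold}, contrary to your claim; you would need to verify directly that the perturbed point keeps $\varepsilon\lf$ within budget. (Your instinct to check feasibility of \eqref{con:LFP_threshold} at the perturbed point is sound --- the paper's own proof is silent on this --- but the check is announced rather than carried out.)
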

\begin{proof}
    See Appendix~\ref{proof:max_power}.
\end{proof}

Theorem~\ref{theorem:max_power} indicates that the inequality constraint~\eqref{con:max_power} can be turned into an equality constraint. Therefore, we can eliminate $P\subscript{K}$ with $P\subscript{K}=P_{\Sigma}-P\subscript{M}$  without affecting the optimality of the solutions for Problem~\eqref{op:origin}. Furthermore, since $R\subscript{d}$ is always positive and non-zero, to maximize it is to minimize its multiplicative inverse, i.e., we have the following equivalent problem:

\begin{mini!}
	{d\subscript{K},  P\subscript{M} }{\frac{1}{R\subscript{d}}\label{obj:sca}} {\label{Problem_SCA}}{\reviseRT{(SP):}{}}{}
	\addConstraint{P\subscript{M}\geqslant 0 \label{con:non-neg_power_message}}
	\addConstraint{P\subscript{M}+P\subscript{K}= P_\Sigma \label{con:max_power_equal}}
	\addConstraint{0\leqslant d\subscript{K} \leqslant n \label{con:max_key_length_relax}}
    \addConstraint{\eqref{con:err_bob_message_threshold} - \eqref{con:LFP_threshold}\label{con:repeated_cons_sca}}
\end{mini!}
However, due to the multiplication of error probabilities in~\eqref{eq:eff_deception_rate}, it is still a non-convex problem. Therefore, we provide the following lemma to decouple it:
\begin{lemma}
\label{lemma:approximation}
    Given any local point $\left(\hat{d}\subscript{K}^{(q)},\hat{P}\subscript{M}^{(q)}\right)$, $\frac{1}{R\subscript{d}}$ is upper-bounded by an approximation $\hat{R}\subscript{d}\left(d\subscript{K},P\subscript{M}\left\vert \hat{d}\subscript{K}^{(q)},\hat{P}\subscript{M}^{(q)}\right.\right)$, i.e.,
    \begin{equation}
    \label{eq:R_app}
    \begin{split}
        \frac{1}{R\subscript{d}(d\subscript{K},P\subscript{M})}&=\frac{1}{R_b(1-\varepsilon\subscript{Eve,M})\varepsilon\subscript{Eve,K}}\\&\leqslant 
        \frac{1}{9\lambda_1\left(\hat{d}\subscript{K}^{(q)},\hat{P}\subscript{M}^{(q)}\right)\lambda_2\left(\hat{d}\subscript{K}^{(q)},\hat{P}\subscript{M}^{(q)}\right)}\\
        &~~~~~~~~~~~~\cdot \left(\frac{1}{R_b}+\frac{\lambda_1}{(1-\varepsilon\subscript{Eve,M})}+\frac{\lambda_2}{\varepsilon\subscript{Eve,K}}                 
            \right)^3\\
        &\triangleq \hat{R}\subscript{d}\left(d\subscript{K},P\subscript{M}\left\vert \hat{d}\subscript{K}^{(q)},\hat{P}\subscript{M}^{(q)}\right.\right),
    \end{split}
    \end{equation}
    where $R_b=1-(1-\varepsilon\subscript{Bob,M})\varepsilon\subscript{Bob,K}$. Moreover, 
    \begin{equation}
    \lambda_1\left(\hat{d}\subscript{K}^{(q)},\hat{P}\subscript{M}^{(q)}\right)=\frac{R_b\left(\hat{d}\subscript{K}^{(q)},\hat{P}\subscript{M}^{(q)}\right)}{1-\varepsilon\subscript{Eve,M}\left(\hat{d}\subscript{K}^{(q)},\hat{P}\subscript{M}^{(q)}\right)},
    \end{equation} 
    
    \begin{equation}
    \lambda_2\left(\hat{d}\subscript{K}^{(q)},\hat{P}\subscript{M}^{(q)}\right)=\frac{R_b\left(\hat{d}\subscript{K}^{(q)},\hat{P}\subscript{M}^{(q)}\right)}{\varepsilon\subscript{Eve,K}\left(\hat{d}\subscript{K}^{(q)},\hat{P}\subscript{M}^{(q)}\right)},
    \end{equation} 
    are non-negative constants at the local point $\left(\hat{d}\subscript{K}^{(q)},\hat{P}\subscript{M}^{(q)}\right)$.  
    \label{lemma:convexity}
\end{lemma}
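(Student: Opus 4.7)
The plan is to attack the bound via the classical three-term arithmetic--geometric mean (AM--GM) inequality, applied to a rescaled product decomposition of $1/R\subscript{d}$. The scaling parameters $\lambda_1,\lambda_2$ are then tuned to the current iterate so that the inequality is tangent there and serves as a majorizer in the surrounding MM/SCA scheme.

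First, I would extract the product structure of $R\subscript{d}$. Combining~\eqref{eq:eff_deception_rate} with the shorthand $R_b=1-(1-\varepsilon\subscript{Bob,M})\varepsilon\subscript{Bob,K}$ gives $R\subscript{d}=R_b\,(1-\varepsilon\subscript{Eve,M})\,\varepsilon\subscript{Eve,K}$, a product of three strictly positive factors over the feasible set. This multiplicative coupling is precisely the source of non-convexity that the surrogate is meant to dissolve.

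Next, for arbitrary positive constants $\lambda_1,\lambda_2>0$ I would apply the three-term AM--GM inequality $abc\leq\left(\tfrac{a+b+c}{3}\right)^{\!3}$ to the rescaled triple
\begin{equation*}
a=\tfrac{1}{R_b},\qquad b=\tfrac{\lambda_1}{1-\varepsilon\subscript{Eve,M}},\qquad c=\tfrac{\lambda_2}{\varepsilon\subscript{Eve,K}},
\end{equation*}
and then divide through by $\lambda_1\lambda_2$. This rearrangement immediately produces an upper bound on $1/R\subscript{d}$ of exactly the functional form asserted in~\eqref{eq:R_app}. To pin down $\lambda_1,\lambda_2$, I would invoke the fact that AM--GM is tight exactly when its three operands coincide, and choose the scalings so that $a=b=c$ holds at $(\hat{d}\subscript{K}^{(q)},\hat{P}\subscript{M}^{(q)})$; the definitions in the lemma statement are one valid such choice, and their non-negativity is inherited directly from the non-negativity of $R_b$, $1-\varepsilon\subscript{Eve,M}$, and $\varepsilon\subscript{Eve,K}$. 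This tangency at the current iterate, together with the pointwise upper bound, turns $\hat{R}\subscript{d}$ into a bona fide majorizer, which is what the surrounding SCA loop needs for monotone descent.

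The principal bookkeeping hurdle I anticipate is reconciling the multiplicative prefactor: a textbook invocation of three-term AM--GM yields $\tfrac{1}{27\lambda_1\lambda_2}$, not the $\tfrac{1}{9\lambda_1\lambda_2}$ that appears in~\eqref{eq:R_app}. Whether the gap is closed by invoking the sharper symmetric identity $(a+b+c)(ab+bc+ca)\geq 9abc$, by adopting a slightly different rescaling of the triple, or by absorbing a constant into the definitions of $\lambda_1,\lambda_2$, is the detail that requires careful checking before the algebraic manipulation can be declared complete. Once the correct normalization is pinned down, the remaining steps amount to routine algebra plus the positivity checks on $\lambda_1,\lambda_2$.
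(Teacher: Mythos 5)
Your decomposition and the three-term AM--GM are sound, and they do establish the displayed inequality: writing $1/R\subscript{d}=(abc)/(\lambda_1\lambda_2)$ with your triple and applying $abc\leqslant\left(\tfrac{a+b+c}{3}\right)^3$ gives the bound with prefactor $\tfrac{1}{27\lambda_1\lambda_2}$, which implies the stated $\tfrac{1}{9\lambda_1\lambda_2}$ version a fortiori (the $\lambda$'s are strictly positive because $R_b$, $1-\varepsilon\subscript{Eve,M}$ and $\varepsilon\subscript{Eve,K}$ are). This is, however, not the route the paper takes: its Appendix~B applies a \emph{two}-term AM--GM to $R\subscript{d}$ itself, written as $\tfrac{1}{\lambda}(1-\varepsilon\subscript{Bob,D})\cdot\lambda\varepsilon\subscript{Eve,D}$ with a single scaling constant $\lambda$, and concludes $R\subscript{d}\leqslant\tfrac{1}{4\lambda}\left((1-\varepsilon\subscript{Bob,D})+\lambda\varepsilon\subscript{Eve,D}\right)^2$ --- an upper bound on $R\subscript{d}$ rather than on $1/R\subscript{d}$, in a notation and functional form that do not match the lemma statement. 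Your three-factor treatment of the reciprocal is therefore the argument that actually matches the claimed bound, whereas the paper's printed proof appears to be a leftover two-term variant.

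Two remarks on the normalization issue you flagged. First, the $27$-versus-$9$ gap cannot simply be absorbed by rescaling $\lambda_1,\lambda_2$: they enter the prefactor as a product but also appear linearly inside the cube alongside the unscaled term $1/R_b$, so no rescaling preserves both the functional form and tangency; and $(a+b+c)(ab+bc+ca)\geqslant 9abc$ yields a different surrogate altogether. The honest reading is that $9$ should be $27$ if equality at the local point is wanted. Second, your tangency condition $a=b=c$ at $\left(\hat{d}\subscript{K}^{(q)},\hat{P}\subscript{M}^{(q)}\right)$ forces $\lambda_1=(1-\varepsilon\subscript{Eve,M})/R_b$ and $\lambda_2=\varepsilon\subscript{Eve,K}/R_b$, i.e.\ the \emph{reciprocals} of the definitions printed in the lemma; with the printed definitions the three operands do not coincide, so the subsequent claim that $1/R\subscript{d}$ equals $\hat{R}\subscript{d}$ at the local point --- the property the MM loop relies on --- would fail. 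In short, your proof of the inequality as literally stated is complete; the majorization (tangency) property requires the corrected constants, which is exactly the loose end you identified.
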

\begin{proof}
    See Appendix~\ref{proof:approximation}. 
\end{proof}

Interestingly, $R\subscript{d}$ is equal to its upper-bound $\hat{R}\subscript{d}$ at the local point, i.e., $\frac{1}{R\subscript{d}}\left(\hat{d}\subscript{K}^{(q)},\hat{P}\subscript{M}^{(q)}\right)=\hat{R}\subscript{d}\left(\hat{d}\subscript{K}^{(q)},\hat{P}\subscript{M}^{(q)}\left\vert \hat{d}\subscript{K}^{(q)},\hat{P}\subscript{M}^{(q)}\right.\right)$. This observation inspires us to leverage the \ac{mm} algorithm~\cite{HL2000MM} combing with the \ac{bcd} method~\cite{tseng2001BCD} to solve the optimization problem. 

In order to do so, we still need further modifications to \reviseRT{$(SP)$}{Problem~\eqref{Problem_SCA}}. In particular, with any local point $(\hat{d}\subscript{K}^{(q)},\hat{P}\subscript{M}^{(q)})$ and the corresponding approximation $\hat{R}\subscript{d}\left(\hat{d}\subscript{K}^{(q)},\hat{P}\subscript{M}^{(q)}\left\vert \hat{d}\subscript{K}^{(q)},\hat{P}\subscript{M}^{(q)}\right.\right)$, we decompose the problem in each $t\superscript{th}$ iteration by letting $P\subscript{M}$ to be a fixed $P^{(t)}\subscript{M}$. The corresponding problem is given by:



\begin{mini!}
	{d\subscript{K}}{\hat{R}\subscript{d}^{(t)}\left(d\subscript{K}\left\vert \hat{d}\subscript{K}^{(q)},\hat{P}\subscript{M}^{(q)}\right.\right)\label{obj:sca_bcd_d_K}} {\label{Problem_SCA_BCD_d_K}}{\reviseRT{(SP1):}{}}{}
	\addConstraint{P\subscript{M}=P\subscript{M}^{(t)}}
	\addConstraint{0\leqslant d\subscript{K} \leqslant n}    \addConstraint{\eqref{con:err_bob_message_threshold} - \eqref{con:LFP_threshold}.\label{con:repeated_cons_sca_bcd_d_K}}
\end{mini!}
Therefore, \reviseRT{$(SP1)$}{Problem~\eqref{Problem_SCA_BCD_d_K}} becomes a single-variable problem and we have the following lemma to characterize it:
\begin{lemma}
\label{lemma:convex_d_K}
    Problem~\eqref{Problem_SCA_BCD_d_K} is convex.
\end{lemma}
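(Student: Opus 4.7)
The plan is to exploit the structural properties of the MM surrogate constructed in Lemma~\ref{lemma:approximation} by reducing everything to a single scalar variable and then invoking standard composition rules for convex analysis. First I would fix $P\subscript{M}=P\subscript{M}^{(t)}$ and $P\subscript{K}=P_{\Sigma}-P\subscript{M}^{(t)}$ and list which quantities become constants: the SINRs $\gamma\subscript{Bob,M}, \gamma\subscript{Bob,K}, \gamma\subscript{Eve,M}, \gamma\subscript{Eve,K}$ obtained after SIC depend only on the fixed power split, so the message-stage error probabilities $\varepsilon\subscript{Bob,M}, \varepsilon\subscript{Eve,M}$ are constants, and $\lambda_1, \lambda_2$ evaluated at the local point $(\hat{d}\subscript{K}^{(q)},\hat{P}\subscript{M}^{(q)})$ are also constants. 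Only $\varepsilon\subscript{Bob,K}$ and $\varepsilon\subscript{Eve,K}$ retain a nontrivial dependence on $d\subscript{K}$.

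Next I would examine how the key-stage error probabilities depend on $d\subscript{K}$. From \eqref{eq:FBL_error}, the argument of $Q$ takes the form $A_i-B_i\,d\subscript{K}$ with $A_i,B_i>0$, i.e.\ it is affine in $d\subscript{K}$. Since $Q''(x)\propto x\,e^{-x^2/2}$, the Q-function is strictly convex on $\{x>0\}$ and strictly concave on $\{x<0\}$. In the operating regime enforced by the threshold constraints, Bob's key decoder operates in the low-error (positive-argument) regime while Eve's operates in the high-error (negative-argument) regime; composing with the affine map therefore yields that $\varepsilon\subscript{Bob,K}(d\subscript{K})$ is convex and $\varepsilon\subscript{Eve,K}(d\subscript{K})$ is concave.

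Armed with these facts, the convexity of the surrogate follows by the usual operational calculus. The denominator $R_b=1-(1-\varepsilon\subscript{Bob,M})\varepsilon\subscript{Bob,K}$ is a positive concave function of $d\subscript{K}$ (an affine-decreasing transform of the convex $\varepsilon\subscript{Bob,K}$), so $1/R_b$ is convex. Similarly, $\lambda_2/\varepsilon\subscript{Eve,K}$ is the reciprocal of a positive concave function and hence convex. The remaining term $\lambda_1/(1-\varepsilon\subscript{Eve,M})$ is constant, so the inner sum in \eqref{eq:R_app} is convex and strictly positive, and since $x\mapsto x^3$ is convex and nondecreasing on $\mathbb{R}_+$, the surrogate $\hat R\subscript{d}^{(t)}$ is convex in $d\subscript{K}$. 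The constraints \eqref{con:err_bob_key_threshold} and \eqref{con:err_eve_key_threshold} are the sub-level and super-level sets of a convex and a concave function respectively, while \eqref{con:err_bob_message_threshold}–\eqref{con:err_eve_message_threshold} and the box constraint \eqref{con:max_key_length_relax} do not depend on $d\subscript{K}$ or are affine.

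The main obstacle I foresee is the \ac{lfp} constraint \eqref{con:LFP_threshold}, because $\varepsilon\lf=1-(1-\varepsilon\bob)\varepsilon\eve$ contains the product $(1-\varepsilon\bob)\varepsilon\eve$ of two positive concave functions, which is not concave in general, so the super-level set $(1-\varepsilon\bob)\varepsilon\eve\geqslant 1-\varepsilon\superscript{th}\lf$ is not obviously convex. I would address this either by showing redundancy, i.e.\ that under the individual error-threshold constraints the LFP bound is automatically satisfied within the working regime, or, consistently with the MM framework already applied to the objective, by replacing this coupling constraint with a tight convex surrogate at the local point so that the per-iteration subproblem preserves convexity while converging to a stationary point of \eqref{Problem_SCA}.
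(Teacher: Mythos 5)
Your treatment of the objective and of the individual error-probability constraints is essentially the paper's own proof: fix $P\subscript{M}$, observe that only $\varepsilon\subscript{Bob,K}$ and $\varepsilon\subscript{Eve,K}$ depend on $d\subscript{K}$, use the affine dependence of the Q-function argument on $d\subscript{K}$ together with the thresholds ($\varepsilon\superscript{th}\subscript{Bob,K}\leqslant 0.5$, $\varepsilon\superscript{th}\subscript{Eve,K}\geqslant 0.5$) to place $\omega\subscript{Bob,K}\geqslant 0$ and $\omega\subscript{Eve,K}\leqslant 0$, hence $\varepsilon\subscript{Bob,K}$ convex and $\varepsilon\subscript{Eve,K}$ concave, and then build convexity of $\hat{R}\subscript{d}$ from reciprocals of positive concave components. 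You are in fact more careful than the paper on the last step, treating the outer power in \eqref{eq:R_app} correctly as a cube composed with a positive convex inner sum, where the paper loosely calls it "quadratic."

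The one loose end is the \ac{lfp} constraint \eqref{con:LFP_threshold}, and here your two proposed escape routes (redundancy, or a further surrogate) are unnecessary: the constraint is directly convex in $d\subscript{K}$. Rewrite it as $(1-\varepsilon\bob)\,\varepsilon\eve\geqslant 1-\varepsilon\superscript{th}\lf$. With $P\subscript{M}$ fixed, $1-\varepsilon\bob=(1-\varepsilon\subscript{Bob,M})(1-\varepsilon\subscript{Bob,K})$ is positive, concave and nonincreasing in $d\subscript{K}$ (a positive constant times $1-\varepsilon\subscript{Bob,K}$, with $\varepsilon\subscript{Bob,K}$ convex and increasing), while $\varepsilon\eve=1-(1-\varepsilon\subscript{Eve,M})(1-\varepsilon\subscript{Eve,K})$ is positive, concave and nondecreasing (a positive constant times the concave increasing $\varepsilon\subscript{Eve,K}$, plus a constant). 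The product of two positive concave functions of a scalar variable, one nondecreasing and the other nonincreasing, is concave (a standard composition result, see~\cite{boyd2004convex}), so the super-level set is convex and the subproblem's feasible set is an interval. The paper dismisses this as trivial without stating the argument, so your instinct that the product structure needs justification is sound; it just admits a direct resolution rather than a workaround.
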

\begin{proof}
    See Appendix~\ref{proof:convex_d_K}.
\end{proof}

According to Lemma~\ref{lemma:convex_d_K}, we can solve Problem~\eqref{Problem_SCA_BCD_d_K} with optimal solution $d^\circ\subscript{K}$ efficiently via any standard convex programming. On the other hand, we have the second decomposed problem in the $t\superscript{th}$ inner iteration by letting $d\subscript{K}=d\subscript{K}^{(t)}$:
\begin{mini!}
	{P\subscript{M}}{\hat{R}\subscript{d}^{(t)}\left(d\subscript{K}\left\vert \hat{d}\subscript{K}^{(q)},\hat{P}\subscript{M}^{(q)}\right.\right)\label{obj:sca_bcd_P_M}} {\label{Problem_SCA_BCD_P_M}}{\reviseRT{(SP2):}{}}{}
	\addConstraint{0\leqslant P\subscript{M} \leqslant P_{\Sigma}}
	\addConstraint{d\subscript{K}=d_{K}^{(t)}}    \addConstraint{\eqref{con:err_bob_message_threshold} - \eqref{con:LFP_threshold}\label{con:repeated_cons_sca_bcd_P_M}}
\end{mini!}
Similarly, we have the following analytical find:
\begin{theorem}
    Problem~\eqref{Problem_SCA_BCD_P_M} is convex.
    \label{theorem:convex_P_M}
\end{theorem}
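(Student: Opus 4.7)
The plan is to mirror the proof of Lemma~\ref{lemma:convex_d_K} but now along the $P\subscript{M}$-axis. Since $d\subscript{K}$ is frozen at $d\subscript{K}^{(t)}$ and $\lambda_1,\lambda_2$ are positive constants evaluated at the local point, the objective in~\eqref{obj:sca_bcd_P_M} has the form $\hat{R}\subscript{d}^{(t)}=\frac{1}{9\lambda_1\lambda_2}\bigl(g(P\subscript{M})\bigr)^3$ with
$$g(P\subscript{M})=\frac{1}{R_b(P\subscript{M})}+\frac{\lambda_1}{1-\varepsilon\subscript{Eve,M}(P\subscript{M})}+\frac{\lambda_2}{\varepsilon\subscript{Eve,K}(P\subscript{M})}.$$
Because $x\mapsto x^3$ is convex and non-decreasing on $\mathbb{R}_{+}$, it suffices to prove (i) $g$ is a strictly positive convex function of $P\subscript{M}$, and (ii) the feasible set induced by \eqref{con:err_bob_message_threshold}--\eqref{con:LFP_threshold} together with $0\leqslant P\subscript{M}\leqslant P_\Sigma$ is a closed interval, hence convex.

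First, I will invoke Theorem~\ref{theorem:max_power} to eliminate $P\subscript{K}=P_\Sigma-P\subscript{M}$ and express the four relevant SINRs as explicit rational functions of $P\subscript{M}$, namely $\gamma\subscript{Bob,M}$ and $\gamma\subscript{Eve,M}$ of the form $\frac{P\subscript{M}z}{(P_\Sigma-P\subscript{M})z+\sigma^2}$ (monotone increasing, strictly concave in $P\subscript{M}$), and $\gamma\subscript{Bob,K}$, $\gamma\subscript{Eve,K}$ of the linear decreasing form $\frac{(P_\Sigma-P\subscript{M})z}{\sigma^2}$. Substituting these into~\eqref{eq:FBL_error} yields each $\varepsilon_{i,j}$ as a composition of $Q$ with a smooth monotone function of $P\subscript{M}$, with clear monotonicity: $\varepsilon\subscript{Bob,M},\varepsilon\subscript{Eve,M}$ decrease and $\varepsilon\subscript{Bob,K},\varepsilon\subscript{Eve,K}$ increase in $P\subscript{M}$. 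From this monotonicity, each constraint in \eqref{con:err_bob_message_threshold}--\eqref{con:err_eve_key_threshold} cuts out a half-line; the \ac{lfp} constraint~\eqref{con:LFP_threshold} is handled by observing that on the feasible region cut out by the previous constraints it is monotone in $P\subscript{M}$ as well, so the overall feasible set is an interval.

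For the convexity of $g$, I plan to show term by term that each of $1/R_b$, $1/(1-\varepsilon\subscript{Eve,M})$ and $1/\varepsilon\subscript{Eve,K}$ is convex and strictly positive in $P\subscript{M}$. I would re-use the same technical machinery that underlies Lemma~\ref{lemma:convex_d_K}: write $\varepsilon_{i,j}\approx Q\bigl(\phi_{i,j}(P\subscript{M})\bigr)$, check the sign of the second derivative of $\phi_{i,j}$ and use the fact that $Q$ is log-concave and monotone decreasing, together with the standard composition rules ($1/(1-Q(\phi))$ and $1/Q(\phi)$ being convex when $\phi$ has the appropriate monotonicity/concavity). The term $1/R_b$ deserves extra care because $R_b=1-(1-\varepsilon\subscript{Bob,M})\varepsilon\subscript{Bob,K}$ couples the two Bob-side errors that vary in opposite directions with $P\subscript{M}$; I would bound $R_b$ below by $1-\varepsilon\subscript{Bob,K}^{\text{th}}$ or factor it as a product $R_b=\varepsilon\subscript{Bob,M}+(1-\varepsilon\subscript{Bob,M})(1-\varepsilon\subscript{Bob,K})$ to separate the two monotonicities and apply the same composition argument. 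Once each summand is convex and positive, $g$ is convex and positive, and hence $g^3$ is convex.

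The main obstacle will be precisely the last point: establishing convexity of $1/R_b$ in $P\subscript{M}$, because $R_b$ inherits both the decreasing $\varepsilon\subscript{Bob,M}$ and the increasing $\varepsilon\subscript{Bob,K}$ and neither monotonicity alone suffices. I expect to handle it by differentiating the explicit first- and second-order expansions of the Q-function composition, exploiting the tight bound $Q(x)\leqslant\tfrac12 e^{-x^2/2}$ to control cross terms, and if needed by invoking the same lower-bounding trick used in the proof of Lemma~\ref{lemma:convex_d_K}, so that the theorem's conclusion follows as a direct analogue of that earlier proof with the roles of $d\subscript{K}$ and $P\subscript{M}$ swapped.
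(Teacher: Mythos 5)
Your overall architecture---reduce convexity of the objective to positivity and convexity of the three reciprocal summands of $g$, then compose with the increasing convex map $x\mapsto x^3$ on $\mathbb{R}_+$---is the same route the paper takes, so the outline is sound. But there is a genuine error and two unexecuted steps that happen to be the entire technical content of the theorem. The error: you claim $\gamma_{i,\mathrm{M}}=\frac{P\subscript{M}z_i}{(P_\Sigma-P\subscript{M})z_i+\sigma^2}$ is strictly concave in $P\subscript{M}$. Writing $C=P_\Sigma z_i+\sigma^2$ gives $\gamma_{i,\mathrm{M}}=-1+C/(C-P\subscript{M}z_i)$, whose second derivative $2Cz_i^2/(C-P\subscript{M}z_i)^3$ is strictly positive, so $\gamma_{i,\mathrm{M}}$ is \emph{convex}. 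This sign is not cosmetic: since $\varepsilon_{i,\mathrm{M}}$ is decreasing in $\gamma_{i,\mathrm{M}}$, the chain-rule term $\frac{\partial\varepsilon_{i,\mathrm{M}}}{\partial\gamma_{i,\mathrm{M}}}\frac{\partial^2\gamma_{i,\mathrm{M}}}{\partial P\subscript{M}^2}$ is negative and works \emph{against} convexity of $\varepsilon_{i,\mathrm{M}}$ in $P\subscript{M}$. The paper has to defeat exactly this term with the quantitative estimate in~\eqref{eq:d2e_dP2}, which is only valid under the extra hypotheses $\gamma_{i,\mathrm{M}}\geqslant\gamma_{th}\geqslant 1$ and $n\geqslant 10$. ``Standard composition rules'' plus log-concavity of $Q$ do not deliver this; under your incorrect concavity claim the difficulty would seem to evaporate precisely where the real work lies.

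The second gap is the one you yourself flag: concavity of $R_b$ (equivalently convexity of $1/R_b$) in $P\subscript{M}$. You list candidate strategies but carry none of them out, and the simplest one (lower-bounding $R_b$ by a constant) cannot establish curvature. The paper's proof computes $\frac{\partial^2 R_b}{\partial P\subscript{M}^2}$ explicitly, uses the opposite monotonicities of $\varepsilon\subscript{Bob,M}$ and $\varepsilon\subscript{Bob,K}$ to reduce the problem to comparing $\frac{\partial^2\varepsilon\subscript{Bob,M}}{\partial P\subscript{M}^2}$ with $\frac{\partial^2\varepsilon\subscript{Bob,K}}{\partial P\subscript{M}^2}$, and then invokes the feasibility conditions ($\varepsilon_{\text{Bob},j}\leqslant\varepsilon\superscript{th}_{\text{Bob},j}$, $P\subscript{M}\geqslant P\subscript{K}$, hence $\omega(\gamma\subscript{Bob,M},d\subscript{M})\leqslant\omega(\gamma\subscript{Bob,K},d\subscript{K})$) to conclude the difference is nonpositive; without some version of this argument the theorem is not proved. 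Finally, your claim that the feasible set is an interval because $\varepsilon\lf$ is ``monotone in $P\subscript{M}$ on the region cut out by the previous constraints'' is unsubstantiated: as $P\subscript{M}$ grows, $\varepsilon\subscript{Bob,M}$ decreases while $\varepsilon\subscript{Bob,K}$ increases (and likewise for Eve), so neither $\varepsilon\subscript{Bob}$, $\varepsilon\subscript{Eve}$, nor $\varepsilon\lf$ is obviously monotone; you need convexity (or at least quasiconvexity) of $\varepsilon\lf$ in $P\subscript{M}$, which again rests on the curvature facts above.
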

\begin{proof}
    See Appendix~\ref{proof:convex_P_M}.
\end{proof}
Therefore, we can also solve Problem~\eqref{Problem_SCA_BCD_P_M} with optimal solution $P^\circ\subscript{M}$ efficiently via convex programming. Let $P^{(t+1)}\subscript{M}=P^\circ\subscript{M}$.


\subsection{Optimization Algorithm}\label{subsec:algorithm}
With the above analyses, we propose an efficient algorithm with two layers of iterations to obtain the solutions of \reviseRT{$(OP)$}{Problem~\eqref{op:origin}}. 
In particular, in each $q\superscript{th}$ outer iteration, we approximate $\frac{1}{R\subscript{d}}$ with \revise{$\hat{R}\subscript{d}^{(q)}:=\hat{R}\subscript{d}\left(\hat{d}\subscript{K}^{(q)},\hat{P}\subscript{M}^{(q)}\left\vert \hat{d}\subscript{K}^{(q)},\hat{P}\subscript{M}^{(q)}\right.\right)$}{$\frac{1}{R\subscript{d}}$ with $\hat{R}\subscript{d}^{(q)}:=\hat{R}\subscript{d}\left(\hat{d}\subscript{K}^{(q-1)},\hat{P}\subscript{M}^{(q-1)}\right)$}. Then, in each $t\superscript{th}$ inner iteration, we solve \reviseRT{$(SP1)$}{Problem~\eqref{Problem_SCA_BCD_d_K}} as a single-variable convex problem by letting $P\subscript{M}=P\subscript{M}^{(t)}$. Denote its optimal solution as $d\subscript{K}^{(t)}$. We solve \reviseRT{$(SP1)$}{Problem~\eqref{Problem_SCA_BCD_P_M}} also as a single-variable convex problem by letting $P\subscript{M}=P\subscript{M}^{(t)}$. Denote its optimal solution as $P\subscript{M}^{(t+1)}$ and enter the next $(t+1)\superscript{th}$ inner iteration. This process is repeated until the stop criterion $\left\vert\hat{R}\subscript{d}^{(t)}\left(\left.d\subscript{K}^{(t)},P\subscript{M}^{(t)}\right\vert\hat{d}\subscript{K}^{(q)},P\subscript{M}^{(q)}\right)\right.-\left.\hat{R}\subscript{d}^{(t-1)}\left(\left.d\subscript{K}^{(t-1)},P\subscript{M}^{(t-1)}\right\vert\hat{d}\subscript{K}^{(q)},P\subscript{M}^{(q)}\right)\right\vert\leqslant \mu\subscript{BCD}$ meets, where $\mu\subscript{BCD}$ is the stop threshold of the inner iteration. Then, we assign $\left(d\subscript{K}^{(t)},P\subscript{M}^{(t)}\right)=\left(\hat{d}\subscript{K}^{(q+1)},\hat{P}\subscript{M}^{(q+1)}\right)$ as the local point of the next $(q+1)\superscript{th}$ outer iteration and approximate the objective function of \reviseRT{$(OP)$}{Problem ~\eqref{op:origin}} with $\hat{R}\subscript{d}^{(q+1)}$. The inner iteration is reset with $t=1$ and start again. This process is  repeated until the stop criterion $\left\vert\hat{R}\subscript{d}^{(t)}\left(\left.d\subscript{K}^{(t)},P\subscript{M}^{(t)}\right\vert\hat{d}\subscript{K}^{(q)},P\subscript{M}^{(q)}\right)\right.-\left.\hat{R}\subscript{d}^{(t)}\left(\left.d\subscript{K}^{(t)},P\subscript{M}^{(t)}\right\vert\hat{d}\subscript{K}^{(q-1)},P\subscript{M}^{(q-1)}\right)\right|\leqslant \mu\subscript{MM}$ meets, where $\mu\subscript{MM}$ is the stop threshold of outer iteration. The obtained solution is denoted as $P\subscript{K}\superscript{*}$ and $d\subscript{K,R}\superscript{*}$. 
Specially, we let $P\subscript{M}\superscript{(0)}=P\subscript{M}\superscript{init}, d\subscript{K}\superscript{(0)}=d\subscript{K}\superscript{init} ,
\hat{P}\subscript{M}\superscript{(0)}=\hat{P}\subscript{M}\superscript{init}, \hat{d}\subscript{K}\superscript{(0)}=\hat{d}\subscript{K}\superscript{init} ,
R\subscript{d}\superscript{(0)}=-\infty$ in the initial round of the iteration. 
It is important to note that the initial values must be feasible for \reviseRT{$(OP)$}{Problem~\eqref{op:origin}}. Remembering that $d\subscript{K}$ must be an integer, the optimal integer solution will be determined by comparing the integer neighbors of $d\subscript{K,R}\superscript{*}$:
\begin{equation}
    d\subscript{K}\superscript{*}=\arg \underset{m \in \left\{\left\lfloor d\subscript{K,R}\superscript{*} \right\rfloor, \left\lceil d\subscript{K,R}\superscript{*} \right\rceil\right\}}{\max} R\subscript{d}\left(P\subscript{M}\superscript{*}\right).
\end{equation}

Clearly, the inner iteration is a \ac{bcd} methods with decomposed sub-problems and the outer iteration is a \ac{mm} algorithm with successive convex approximations. This approach to solve Problem~\eqref{op:origin} is described in Algorithm~\revise{}{\ref{alg:framework}}. \revise{}{The method can attain near-optimal solutions with a complexity of $\mathcal{O}\left(\phi\left(4N^2\right)\right)$, where $N$ denotes the number of variables in Problem~\eqref{op:origin} and $\phi(\cdot)$ signifies the number of iterations based on the accuracy of the solution.}

\begin{algorithm}[!htbp]
	\SetAlgoLined
	Input: $\mu\subscript{BCD},\mu\subscript{MM}, T,Q,P\subscript{\Sigma}, d\subscript{M}, n, $ \\
	Initialize: $t=1, q=1, P\subscript{M}\superscript{(0)}=P\subscript{M}\superscript{init}, d\subscript{K}\superscript{(0)}=d\subscript{K}\superscript{init} ,
	\hat{P}\subscript{M}\superscript{(0)}=\hat{P}\subscript{M}\superscript{init}, \hat{d}\subscript{K}\superscript{(0)}=\hat{d}\subscript{K}\superscript{init} ,
	R\subscript{d}\superscript{(0)}=-\infty$ \\
	\Do{
		$\frac{\hat{R}^{(t)}\subscript{d}\left(\hat{d}^{(q)}\subscript{K},\hat{P}^{(q)}\subscript{M}\right)-\hat{R}^{(t)}\subscript{d}\left(\hat{d}^{(q-1)}\subscript{K},\hat{P}^{(q-1)}\subscript{M}\right)}{\hat{R}^{(t)}\subscript{d}\left(\hat{d}^{(q-1)}\subscript{K},\hat{P}^{(q-1)}\subscript{M}\right)}> \mu\subscript{MM}$
		}{
		\eIf{
			$q\leqslant Q$
		}{
			$t\leftarrow 1$ (reset index $t$)\\
			$\hat{R}^{(q)}\subscript{d}:= \hat{R}\subscript{d}\left(\hat{d}^{(q-1)}\subscript{K},\hat{P}^{(q-1)}\subscript{M}\right)$ \\
			\Do{
				$\frac{\hat{R}\subscript{d}^{(t)}-\hat{R}\subscript{d}^{(t-1)}}{\hat{R}\subscript{d}^{(t-1)}}> \mu\subscript{BCD}$
				}{
				\eIf{
					$t\leqslant T$
				}{
					$P\subscript{M}^{(t)}\leftarrow \arg \underset{P\subscript{M}}{\min}\; \hat{R}\subscript{d}\left(d\subscript{K}^{(t-1)},P\subscript{M}\right)$ \\
					$d\subscript{K}^{(t)}\leftarrow \arg \underset{d\subscript{K}}{\min}\; \hat{R}\subscript{d}\left(d\subscript{K},P\subscript{M}^{(t)}\right)$ \\
					$\hat{R}\subscript{d}^{(t)}\leftarrow \hat{R}\subscript{d}\left(d\subscript{K}^{(t)},P\subscript{M}^{(t)}\right)$ \\
					$t \leftarrow t+1$
				}{
				\textbf{break}
				}
			}
			$\hat{d}\subscript{K}^{(q)}\leftarrow d\subscript{K}^{(t)}$\\ 
			$\hat{P}\subscript{K}^{(q)}\leftarrow P\subscript{K}^{(t)}$\\
			$q\leftarrow q+1$
		}{
		\textbf{break}
		}
	}
	$P\subscript{K}\superscript{*}\leftarrow P\subscript{K}^{(q)}$ \\
	$d\subscript{K,R}\superscript{*}\leftarrow \arg \underset{m \in \left\{\left\lfloor d\subscript{K}^{(q)} \right\rfloor, \left\lceil d\subscript{K}^{(q)} \right\rceil\right\}}{\max} R\subscript{d}\left(P\subscript{M}^{(q)}\right)$ \\
	\KwRet{$\left(d\subscript{K}\superscript{*}, P\subscript{M}\superscript{*}\right)$}
\caption{The \revise{Proposed}{proposed \ac{mm}-\ac{bcd}} framework}
\label{alg:framework}
\end{algorithm}

\revise{}{\subsection{Real-Time Adaptation to Channel Dynamics}\label{subsec:ch_dynamics}
So far in this work, we have only discussed the static scenario where both $z\subscript{Bob}$ and $z\subscript{Eve}$ remain consistent. However, in practical scenarios, the channel gains are subject to high dynamics due to factors such like user mobility, environment changes, and interference from other devices.

Therefore, towards practical deployment, the \ac{pld} approach must be able to adapt to the channel dynamics in real-time. This can be achieved by periodic channel measurement to update the \ac{csi} within channel coherence time, and therewith re-optimize the power allocation and key length. 
While online solution of the optimization problem can be computationally expensive, a practical and cost-efficient approach is to pre-calculate a set of optimal solutions for different channel conditions, and select the most suitable one from a \ac{lut} w.r.t. the real-time channel measurement. 
}

\section{Numerical Evaluation}\label{sec:evaluation}
To validate our theoretical analyses and evaluate the proposed approach, a series of numerical experiments were conducted. The common parameters of the simulation setup are listed in \reviseRT{Tab.}{Table~}\ref{tab:sim_setup}, while task-specific ones will be detailed later.

\begin{table}[!htpb]
	\centering
	\caption{Simulation setup}
	\label{tab:sim_setup}
	\begin{tabular}{m{1.2cm} m{1.5cm} m{4.3cm}}
		\toprule[2px]
		\textbf{Parameter} 		&\textbf{Value} 		&\textbf{Remark}\\
		\midrule[1px]
		$\sigma^2$					&\SI{1}{\milli\watt}			&Noise power\\
        \rowcolor{gray!30}
		$z\bob$					  &\SI{0}{\dB}				&Channel gain of \emph{Bob}\\
		$B$								 &\SI{1}{\hertz}			&Normalized to unity bandwidth\\
		\rowcolor{gray!30}
		$n$								 &64							 &Block length per packet\\
		$\varepsilon\subscript{Bob,M}\superscript{th}$ $\varepsilon\subscript{Bob,K}\superscript{th}$ $\varepsilon\subscript{Eve,M}\superscript{th}$ $\varepsilon\subscript{Eve,K}\superscript{th}$		&0.5							 &Thresholds in constraints \eqref{con:err_bob_message_threshold}--\eqref{con:err_eve_key_threshold}\\
        \rowcolor{gray!30}
		\revise{}{$\mu\subscript{MM}$}							&\revise{}{$1\times 10^{-7}$}	&\revise{}{\ac{mm} convergence threshold}\\
		\revise{$\xi$}{$\mu\subscript{BCD}$}							&\revise{}{$1.49\times 10^{-8}$}	&\ac{bcd} convergence threshold\\
		\rowcolor{gray!30}
		\revise{}{$Q$}				&\revise{}{100}							&\revise{}{Maximal number of iterations in \ac{mm}}\\
		$K$								&100							&Maximal number of iterations in \ac{bcd}\\
		\bottomrule[2px]
	\end{tabular}
\end{table}

\subsection{Superiority of Full-Power Transmission}\label{subsec:full_power}
To verify that the optimal power allocation scheme always fully utilizes the transmission power budget, we set $P\subscript{\Sigma}=\SI{10}{\milli\watt}$, $z\eve=-\SI{10}{\dB}$, $d\subscript{M}=16$, and calculated the deception rate $R\subscript{d}$ according to \reviseRT{Eq.}{Problem}~\eqref{op:origin} in the region $(P\subscript{M}, P\subscript{K}) \in \left[0, \SI{10}{\milli \watt}\right]^2$. We then performed exhaustive search to find the optimal $P\subscript{M}\superscript{o}$ that maximizes $R\subscript{d}$ in the feasible region of Problem~\eqref{op:origin} with $\varepsilon\superscript{th}\lf=0.5$, for two different cases where $d\subscript{K}=30$ and $d\subscript{K}=60$, respectively.

The results are illustrated in Fig.~\ref{fig:full-power}\revise{,}{. Under both setups, we can see that all the optimal combinations are located on the line $P\subscript{M}+P\subscript{K}=P\subscript{\Sigma}$}, which confirms our theoretical analysis \revise{under both setups.}{that the optimal power allocation scheme always fully utilizes the transmission power budget.} 

\begin{figure}[!htpb]
	\centering
	\includegraphics[width=\linewidth]{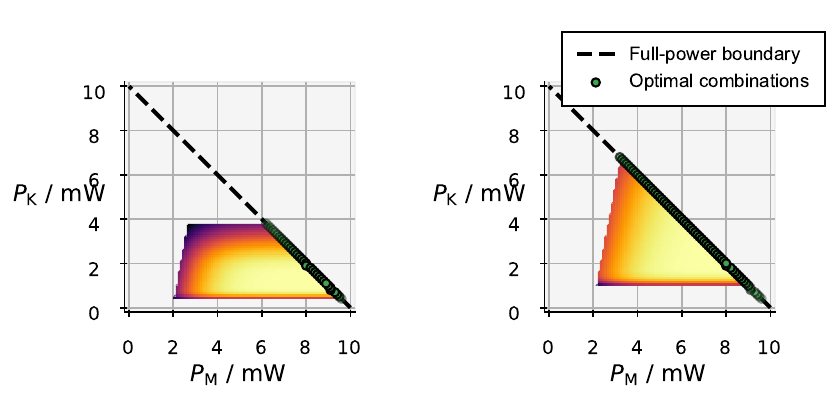}
	\caption{The optimal combinations of $P\subscript{M}$ and $P\subscript{K}$ in case of $P\subscript{\Sigma}=\SI{10}{\milli\watt}$, with $d\subscript{K} = 30$ (left) and $d\subscript{K} = 60$ (right). }
	\label{fig:full-power}
\end{figure}

\subsection{Deception Rate Surface}\label{subsec:deception_rate_surface}
To gain insight into the deception rate surface $R\subscript{d}$ under the full-power transmission scheme, we set $P\subscript{\Sigma}=\SI{10}{\milli \watt}$, $z\eve=\SI{-10}{\dB}$, $d\subscript{M}=16$, and computed $R\subscript{d}$ in the region $(P\subscript{M}, d\subscript{K}) \in [0, \SI{10}{\milli \watt}]\times \{0,1,...64\}$ with $\varepsilon\superscript{th}\lf=0.5$. The result is illustrated in Fig. \ref{fig:deception_rate-0.5}, where the feasible region defined by constraints \eqref{con:err_bob_message_threshold}--\eqref{con:LFP_threshold} is highlighted with greater opacity compared to the rest.

From this figure, we observe that within the feasible region, the deception rate $R\subscript{d}$ exhibits concavity with respect to both \revise{$P\subscript{M}$ and $d\subscript{K}$}{$d\subscript{K}$ and $P\subscript{M}$, which we have analytically derived as Lemma~\ref{lemma:convex_d_K} and Theorem~\ref{theorem:convex_P_M}, respectively}. However, the behavior regarding convexity or concavity outside this region appears to be more complex.

\begin{figure}[!htpb]
	\centering
	\includegraphics[width=\linewidth]{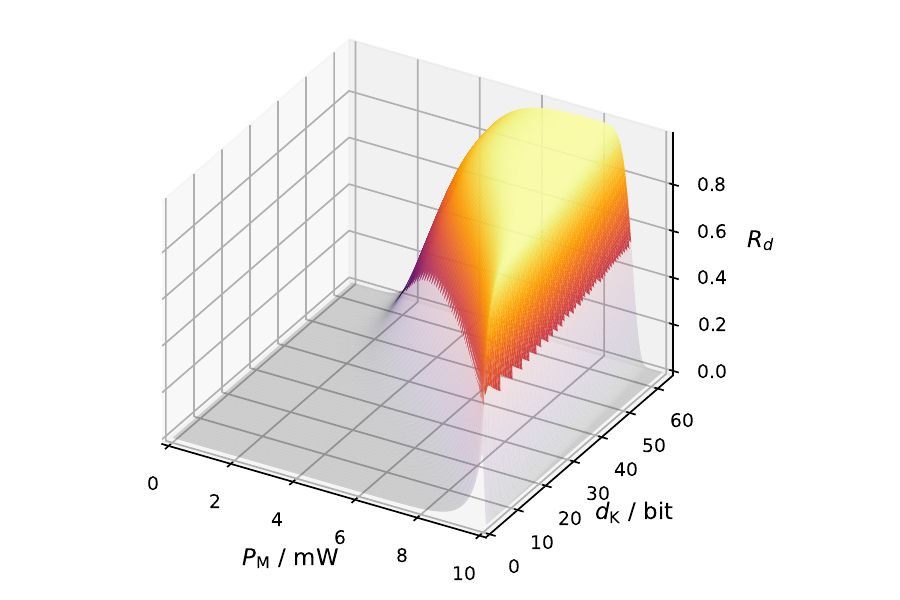}
	\caption{Deception rate under full-power transmission with $\varepsilon\superscript{th}\lf=0.5.$}
	\label{fig:deception_rate-0.5}
\end{figure}

\subsection{Convergence Test of the Optimization Algorithm}\label{subsec:convergence_test}
To assess the practicality of the proposed \revise{\ac{bcd} algorithm}{\ac{mm}-\ac{bcd} framework} in optimizing both key length and power allocation, we conducted Monte-Carlo simulations with $P\subscript{\Sigma}=\SI{10}{\milli \watt}$, $z\eve=\SI{-10}{\dB}$, and $\varepsilon\superscript{th}\lf=0.5$. The algorithm was evaluated with two different lengths of the payload message: $d\subscript{M}=16$ and $d\subscript{M}=24$ respectively.

The results presented in Fig. \ref{fig:bcd_convergence} indicate that the \revise{\ac{bcd} algorithm}{\ac{mm}-\ac{bcd} framework} effectively reaches convergence in both cases, obtaining the optimum after 7 and 8 iterations, respectively. From this figure, it is observed that there is a tiny gap between the local optimum obtained by the \revise{\ac{bcd} algorithm}{\ac{mm}-\ac{bcd} framework} and the global optimum found by exhaustive search, which is attributed to the flatness of the region surrounding the optimal point. The error between the final step of the \revise{\ac{bcd} algorithm}{\ac{mm}-\ac{bcd} framework} and the global optimum is \revise{}{$1.85 \times 10^{-8}$} for $d\subscript{M}=16$, and \revise{}{$3.44 \times 10^{-8}$} for $d\subscript{M}=24$.

\begin{figure}[!htpb]
	\centering
	\includegraphics[width=\linewidth]{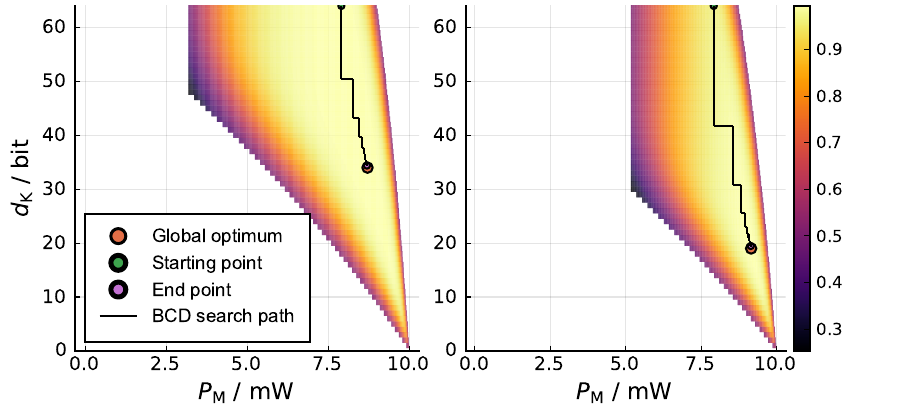}
	\caption{The $R\subscript{d}$ surface and the search path, with $d\subscript{M}=16$ (left) and $d\subscript{M}=24$ (right).}
	\label{fig:bcd_convergence}
\end{figure}

\subsection{Performance Evaluation}\label{subsec:performance_evaluation}
To evaluate the secrecy and deception performances of our proposed approach, we focus on the \ac{lfp} $\varepsilon \lf$ and the effective deception rate $R\subscript{d}$, respectively.

First, we set $P_{\Sigma}=\SI{3}{\milli \watt}$ and $\varepsilon\superscript{th}\lf=0.5$, then evaluated our method under various eavesdropping channel conditions. For benchmarking purpose, we also measured the \ac{lfp} of two conventional \ac{pls} approaches as baselines. Both the baseline solutions apply no deceptive ciphering ($d\subscript{K} = 0$, $P\subscript{K} = 0$), so they are incapable of deceiving but only minimizing $\varepsilon \lf$. The first baseline selects the optimal $P\subscript{M} \in \left[0, P\subscript{\Sigma}\right]$ regarding a fixed $d\subscript{M} = 16$, while the second searches for the best $d\subscript{M} \in [16,64]$ for a full-power transmission $P\subscript{M}=P\subscript{\Sigma}$.

The results are displayed in Fig. \ref{fig: sensitivity-h_eve}. Our \ac{pld} solution is able to maintain a satisfactory \ac{lfp} that is significantly lower than the preset threshold $\varepsilon\subscript{LF}\superscript{th}$, while exhibiting a high effective deception rate. Especially, our method is not only robust to the eavesdropping channel gain, but even slightly benefiting from a reasonably good $z\eve$. In contrast, both baselines, performing closely to each other, logarithmically increase in \ac{lfp} as $z\eve$ increases. This allows our \ac{pld} solution to outperform the baselines by a significant \ac{lfp} margin over good eavesdropping channels, while simultaneously delivering an excellent deception rate up over $95\%$. On the other hand, when the eavesdropping channel gain is poor, our method is still well capable of deceiving \emph{Eve} with $R\subscript{d}>75\%$, at only a reasonable cost of increased \ac{lfp} regarding the baselines. 

\begin{figure}[!htpb]
	\centering
	\includegraphics[width=\linewidth]{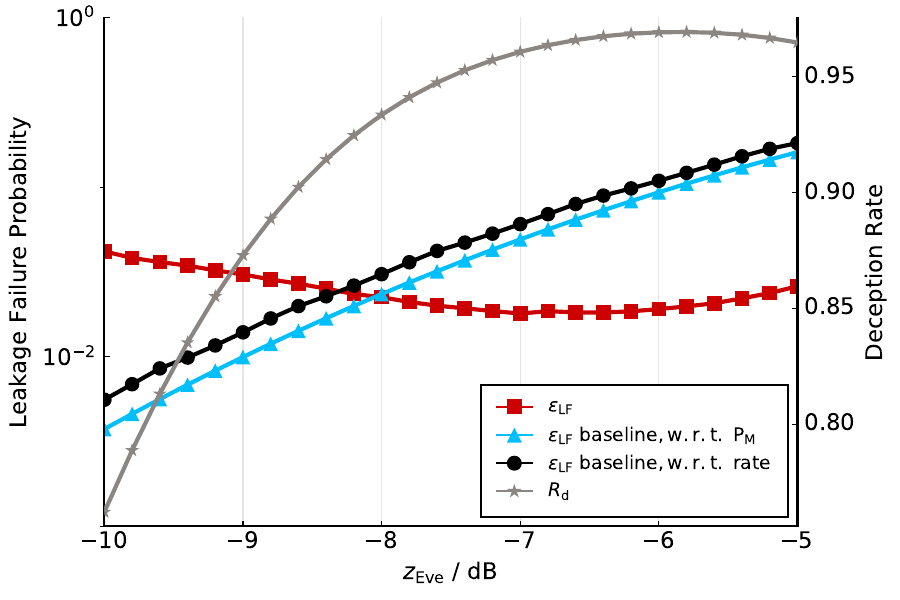}
	\caption{Results of sensitivity test regarding $z\eve$.}
	\label{fig: sensitivity-h_eve}
\end{figure}

Next, we set $z\eve= -\SI{5}{\dB}$, $\varepsilon\superscript{th}\lf=0.5$ and evaluated our method under varying power budgets $P\subscript{\Sigma}$. The results are shown in Fig. \ref{fig: sensitivity-p_total}. With an adequate power budget, the $\varepsilon \lf$ decreases notably compared to the baseline, along with a high effective deception rate. Additionally, unlike traditional \ac{pls} solutions, which do not benefit or benefit little from increased power budgets, our method performs significantly better by increasing $P\subscript{\Sigma}$.

\begin{figure}[!htpb]
	\centering
	\includegraphics[width=\linewidth]{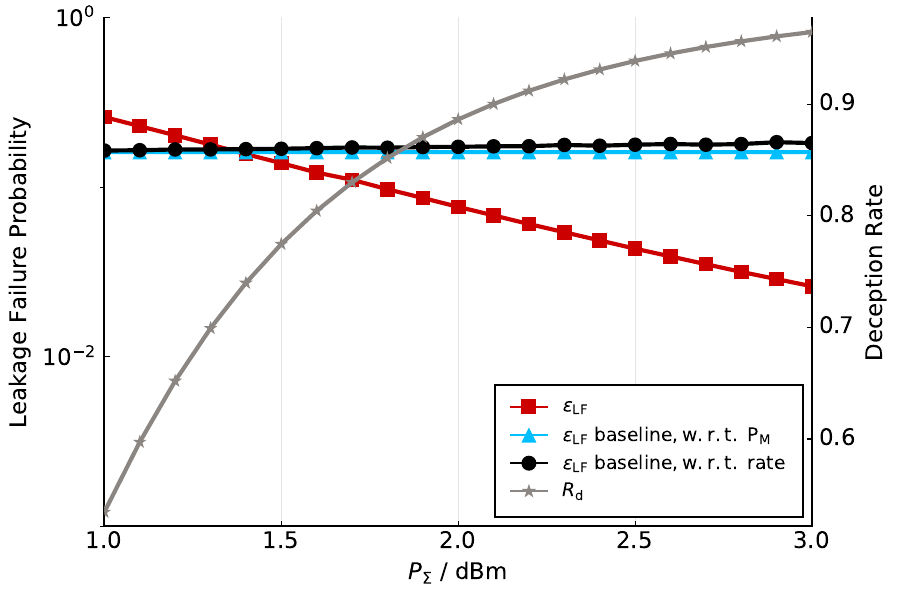}
	\caption{Results of sensitivity test regarding $P_{\Sigma}$.}
	\label{fig: sensitivity-p_total}
\end{figure}

The outcomes of a more comprehensive benchmark test, which combines various $z\eve$ and $P\subscript{\Sigma}$, are depicted in Fig. \ref{fig: benchmark results}. We still kept the setup $\varepsilon\superscript{th}\lf=0.5$. These results demonstrate that our method generally outperforms the classical \ac{pls} baselines with an adequate power budget under various eavesdropping channel conditions. In more detail, the minimum $P\subscript{\Sigma}$ required for our approach to surpass baseline performance increases as the channel gain difference $z\bob - z\eve$ becomes larger.
\begin{figure}[!htpb]
	\centering
	\includegraphics[width=\linewidth]{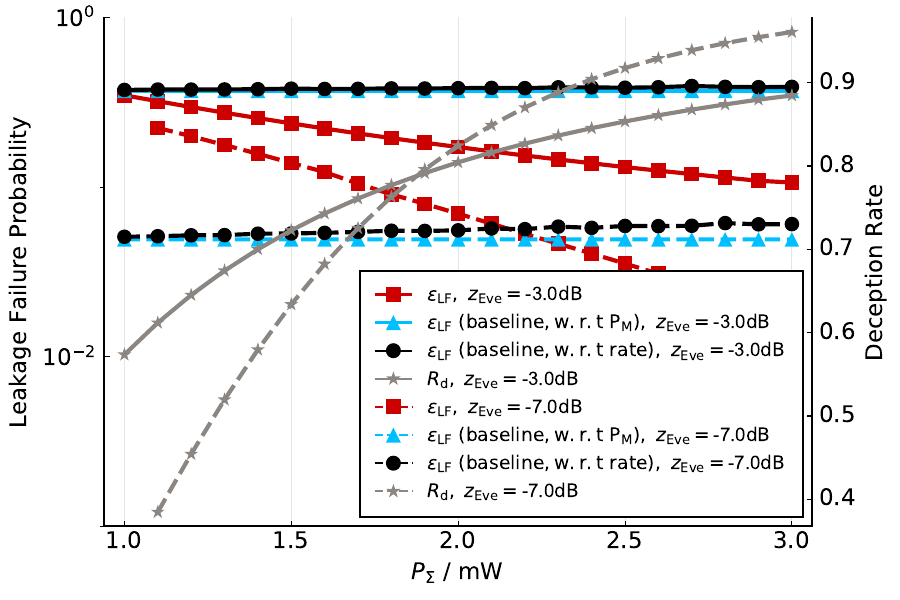}
	\caption{Benchmark results}
	\label{fig: benchmark results}
\end{figure}

\revise{}{Generally, the performance gain of our \ac{pld} scheme regarding conventional \ac{pls} baselines is attributed to its extra degree of freedom, which is introduced by the decomposition of one confidential message into two independent parts, i.e., the ciphertext and the key. With the partially decoupled dependencies of $\varepsilon\subscript{Bob,M}$, $\varepsilon\subscript{Bob,K}$, $\varepsilon\subscript{Eve,M}$, and $\varepsilon\subscript{Eve,K}$ regarding the specification of $\left(P\subscript{M}, P\subscript{K}\right)$, \ac{pld} is enabled with a more flexible resource allocation strategy for enhanced performance.}

In the previous experiments, we maintained the setup $\varepsilon\superscript{th}\lf=0.5$. Given that the threshold of \ac{lfp} $\varepsilon\superscript{th}\lf$ influences the feasible region and potentially the optimum's value, we designed an experiment to explore the impact of this constraint on the deception rate. Specifically, we focused on how the deception rate $R\subscript{d}$ changes with respect to $\varepsilon\superscript{th}\lf$. 

We set $P_{\Sigma}=\SI{2}{\milli\watt}$ and measured $\varepsilon \lf$ as well as $R\subscript{d}$ regarding $\varepsilon\superscript{th}\lf$ under various $z\eve$. Remarkably, neither $R\subscript{d}$ nor $\varepsilon\subscript{LF}$ is influenced by $\varepsilon\subscript{LF}\superscript{th}$. Their optima remain constants under certain channel conditions, as listed in \revise{Tab.}{Table}~\ref{tab: benchmark_feasible_lfp_th results}. Nevertheless, it is worth noting that the selection of $\varepsilon\subscript{LF}\superscript{th}$ significantly impacts the feasible region size of the problem. Given a certain transmission power budget, the feasible region shrinks with decreasing $\varepsilon\superscript{th}\lf$. In fact, when $z\eve = -\SI{3}{\dB}$ and $\varepsilon\superscript{th}\lf=0.1$, no feasible region exists under the constraints (\ref{con:err_bob_message_threshold}--\ref{con:LFP_threshold}). In such cases, one option is to accept a sub-optimal solution with reduced transmission power, where $P\subscript{M} + P\subscript{K} < P\subscript{\Sigma}$. Alternatively, one can adjust the blocklength $n$ of each packet.

\begin{table}[!htpb]
    \centering
	\caption{Benchmark results regarding $\varepsilon\lf\superscript{th}$ with $P_{\Sigma}=\SI{2}{\milli\watt}$}
	\label{tab: benchmark_feasible_lfp_th results}
	\begin{tabular}{m{3cm}m{1.3cm}m{1.3cm}m{1.3cm}}
		\toprule[2px]
		$z\eve$     & $\SI{-3}{\dB}$ &  $\SI{-5}{\dB}$  &  $\SI{-7}{\dB}$ \\         \midrule[1px]
		$\varepsilon\lf$ & $0.1886^{*}$                & 0.0964                            & 0.1003                                       \\ 
		$\varepsilon\lf$ (baseline w.r.t $P\subscript{M}$) &    0.3708               & 0.1611                            & 0.0492                                     \\ 
		$\varepsilon\lf$   (baseline w.r.t rate) & 0.3840   &      0.1732            & 0.0557                 \\
		$R\subscript{d}$  & $0.7989^{*}$  & 0.8800  & 0.8163 \\ 
        \midrule[1px]
        \multicolumn{4}{m{8cm}}{*: The feasible region vanishes with full-power transmission, sub-optimum taken instead.}\\
		\bottomrule[2px]
	\end{tabular}
\end{table}

Additionally, we designed experiments to investigate how the minimum $\varepsilon\superscript{th}\lf$, required to ensure the existence of a feasible solution, varies with the total power $P\subscript{\Sigma}$. The results are depicted in Fig. \ref{fig: benchmark_feasible_lfp_th results}. From the figure, it is evident that as the total power $P\subscript{\Sigma}$ increases, the minimum $\varepsilon\superscript{th}\lf$ required to maintain a feasible region becomes progressively smaller.

\begin{figure}[!htpb]
	\centering
	\includegraphics[width=\linewidth]{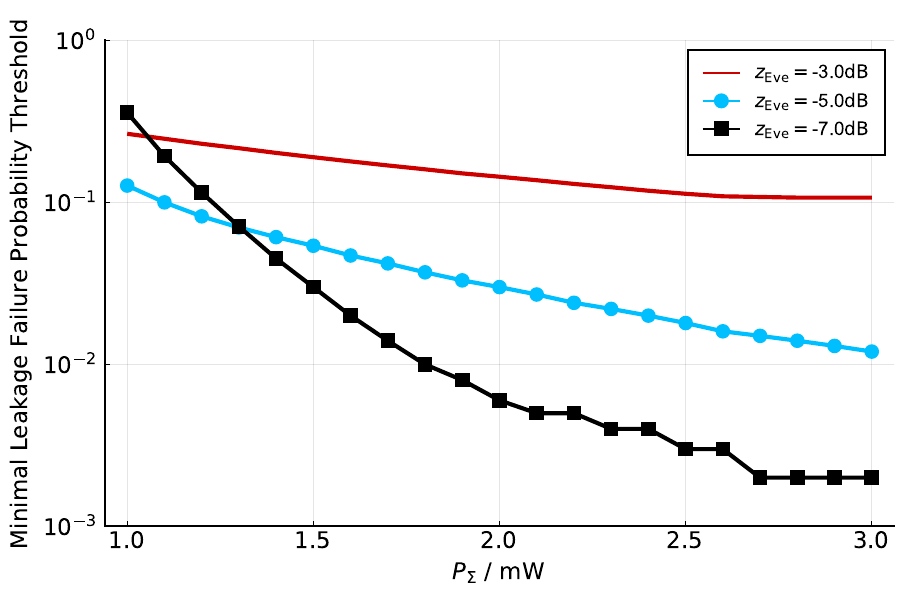}
	\caption{The minimal \ac{lfp} threshold to have a feasible region.}
	\label{fig: benchmark_feasible_lfp_th results}
\end{figure}

\section{Discussions}\label{sec:discussion}
\subsection{Use Cases}\label{subsec:use_cases}
Upon the specific use scenario, our \ac{pld} approach can be applied on either the \ac{up} or the \ac{cp}. For \ac{up} application scenarios, merely a single counterfeit message shall suffice to significantly undermine the eavesdropper's interests, relying solely on the effectiveness of deception.  Typical use cases of this kind are including, but not limited to, military communications, police operations, and confidential business negotiations. When applied on the \ac{cp}, in contrast, the focus is inducing the eavesdropper to expose itself, which relies on an appropriate radio interface protocol design that well merges the \ac{pld} strategy with its authentication procedure.

\subsection{Ciphering Codec Design}\label{subsec:codec_design}
\revise{For the deceptive ciphering algorithm $f:\mathbb{P}\times\mathbb{K}\to\mathbb{M}$, it is essential to }{Intrinsically being a specialized \ac{pls} approach, \ac{pld} does not aim to replace traditional cryptographic methods but to complement them. Generally, our proposed \ac{pld} scheme can be flexibly combined with various traditional cryptographic methods to provide a multi-layer security protection. Nevertheless, though \ac{pld} is not dedicated with any specific cryptographic scheme, it does require its ciphering algorithm deceptive ciphering algorithm $f:\mathbb{P}\times\mathbb{K}\to\mathbb{M}$ to essentially} satisfy Eq.~\eqref{eq:undetectable_ciphering} and Eq.~\eqref{eq:unreplacable_key}. The former ensures that the eavesdropper cannot estimate the cipherer activation status from the decoded ciphertext $m$, and the latter invalidates the eavesdropping strategy of attempting to decrypt the ciphertext with a random key. However, these essential requirements propose a challenge to the ciphering codec design, especially when \ac{pld} is applied on the \ac{up}. On the one hand, it can become a conundrum to satisfy both of them when the cardinality of $\mathbb{M}$ is large. On the other hand, a small $\left\vert\mathbb{M}\right\vert$ will harshly limit the amount of information carried by each single message. Though this may not be a serious issue for the on \ac{cp} where only limited amount of commands are available, it will be a significant challenge for generic \ac{up} application scenarios where a large amount of information needs to be transmitted. Forcing to use a $\mathbb{M}$ with small cardinality in such scenarios will break semantically complete information into multiple codewords, which not only reduces the impairment of the eavesdropper's interests that can be caused by one single false message, but also allows the eavesdropper to leverage its semantic knowledge for coherence analysis on the multiple messages, and therewith avoid being deceived.

A potential solution to this challenge is to combine \ac{pld} with semantic communications. By deploying paired semantic encoder and decoder on both sides of the communication link, it is not only significantly reducing the raw data rate required to deliver the same amount of semantic information, but also effectively constraining the feasible region of codewords.

\revise{}{\subsection{Imperfect Channel State Information}\label{subsec:imperfect_CSI}
Earlier in Sec.~\ref{subsec:ch_dynamics}, we have discussed the adaptation of \ac{pld} scheme to channel dynamics based on real-time \ac{csi} update. Nevertheless, it shall be remarked that even a periodic channel measurement cannot ensure the perfect \ac{csit} in practical scenarios.

While \emph{Bob}'s \ac{csi} is often easy to measure, \emph{Eve}'s \ac{csi} is commonly unobservable by \emph{Alice}. A common approach to address this issue is to consider \emph{Alice} possessing the statistical \ac{csi} of \emph{Eve}, which can be estimated from context information such like the radio environment and the user mobility model. The \ac{pld} scheme can be then optimized regarding the average performance based on such statistical knowledge.

Moreover, limited by the \ac{csi} updating rate, even \emph{Bob}'s \ac{csi} is not always up-to-date, especially when the channel dynamics are high. An outdated \ac{csit} will certainly lead to suboptimal configuration of the \ac{pld} scheme, which may result in performance degradation. Leveraging the insights into \ac{fbl} systems with imperfect \ac{csit} from literature~\cite{HSG2018optimal,KMD+2024SWIPT}, a dedicated performance analysis of \ac{pld} under outdated \ac{csi} can be interesting for future research. As a potential pillar, the emerging techniques of \ac{ai} based fading channel prediction can be promising in compensating the performance loss due to outdated \ac{csit}~\cite{JS2020deep}.}


\subsection{Peak-to-Average Power Ratio}
Regarding Theorem \ref{theorem:max_power} that our approach always prefers full-power transmission, it is not only beneficial for the simplification of the optimization, but also for the reduction of the \ac{papr} of the transmitted signal. Regardless the measured legitimate channel gain $z\subscript{Bob}$ or the estimated eavesdropping channel gain $z\subscript{Eve}$, the transmitter always maintains a consistent transmission power across all messages. Combined with symbol-level \ac{papr} reducing techniques, such as \ac{dft-s-ofdm}, it can provide an outstanding performance in terms of power efficiency and linearity of the power amplifier, which is crucial in practical implementation of wireless transceivers.

\subsection{Orthogonal Frequency-Division Multiplexing}\label{subsec:ofdm}
While \ac{nom} is promising in terms of performance, it lacks compatibility with conventional wireless standards. Adopting our \ac{pld} approach in \ac{ofdm} systems appear therefore an attractive alternative. In an \ac{ofdm} frame, the radio resource, managed in terms of \acp{prb}, can be allocated between the ciphertext and the key in the time-frequency domain. This design frees the receiver from the \ac{sic} operation, as the key and the ciphertext are independently decoded in parallel, which reduces the hardware complexity. However, unlike the transmission power that can be arbitrarily divided, the \acp{prb} can only be allocated in integer numbers, which may lead to a less optimality in comparison to the \ac{nom} solution.

\subsection{Multi-Access}\label{subsec:multi-access}
Though this work mainly focuses on the point-to-point communication scenario, the \ac{pld} approach can also be extended to wireless network scenarios, where the multi-access scheme must be well considered.

The multi-access solution is strongly related to the selection of ciphertext-key multiplexing scheme. As cascaded \ac{sic} is likely leading to a high error rate in key decoding, we do not recommend applying \ac{noma} on top of \ac{nom}-based \ac{pld}, but \ac{oma} solutions such like \ac{ofdma} or simple \ac{tdma}. However, if the ciphertext and key are orthogonally multiplexed like discussed in Sec.~\ref{subsec:ofdm}, \ac{noma} can be considered as a feasible solution to achieve a higher spectral efficiency. \revise{}{However, this also complicates the optimization approach for allocating the power and choosing the key size.} 

\revise{}{Regarding the strategy optimization, a straightforward extension of the single-link optimization problem \reviseRT{$(OP)$}{Problem~\eqref{op:origin}} may be computationally infeasible due to the linear growth of the degree of freedom w.r.t. the number of users. A potential solution is to decompose the optimization task into two stages, where the first stage executes on the \ac{mac} layer to allocate radio resources among users, and the second stage performs on the \ac{phy} layer to optimize the power and key size for each user as resolved in this work. This two-stage optimization approach allows our proposed link-level solution to run on top of various \ac{mac} layer resource allocation strategies, which can be appropriately selected upon the system use scenario, and significantly reduces the computational complexity and improve the scalability of the \ac{pld} solution.}

\section{Conclusion}\label{sec:conclusion}

In this work, we have proposed a comprehensive design for a novel \ac{pld} approach that integrates Physical Layer Security \ac{pls} with deception technology.
 Jointly optimizing the transmission power and encryption key length, we are able to maximize the effective deception rate under a given constraint of \ac{lfp}, simultaneously achieving both secrecy and reliability of communication. We have proved that the optimal power allocation always fully utilizes the transmission power budget, and proposed an efficient algorithm to solve the corresponding optimization problem. The numerical results have demonstrated the superiority of our approach over conventional \ac{pls} solutions in terms of both secrecy and deception performance. Further, we have discussed the potential use cases, the challenges in ciphering codec design, the benefits of the full-power transmission scheme, and the potential extensions of our approach to \ac{ofdm} systems and multi-access scenarios.

\section*{Acknowledgment}
This work is supported in part by the German Federal Ministry of Education and Research in the programme of "Souverän. Digital. Vernetzt." joint projects 6G-ANNA (16KISK105/16KISK097), 6G-RIC (16KISK030/16KISK028) and Open6GHub (16KISK003K/16KISK004/16KISK012),  and in part by the European Commission via the Horizon Europe project Hexa-X-II (101095759). 




\appendices
\section{Proof of Theorem~\ref{theorem:max_power}}
\label{proof:max_power}
	\begin{proof}
		This theorem can be proven by the contradiction. First, with a given $d\subscript{K}$, we define the following auxiliary function to ease the notation:
		\begin{equation}
			\varepsilon_{i,j}(P)\triangleq\varepsilon_{i,j}\vert_{P_j=P}, \forall (i,j)\in\{\text{Bob},\text{Eve}\}\times\{\text{M},\text{K}\},
		\end{equation}
		Suppose there exists an optimal power allocation $\left(P\superscript{o}\subscript{K}, P\superscript{o}\subscript{M}\right)$ that leaves from the power budget a positive residual $P_\Delta=P_\Sigma-P\superscript{o}\subscript{K}-P\superscript{o}\subscript{M}>0$. Since it is optimal, for any feasible power allocation $\left(P\subscript{M}, P\subscript{K}\right)$ it must hold that
		\begin{equation}
			R\subscript{d}\left(P\superscript{o}\subscript{M}, P\superscript{o}\subscript{K}\right)\geqslant R\subscript{d}\left(P\subscript{M}, P\subscript{K}\right).
		\end{equation}
		Meanwhile, there is always another feasible allocation $(P\superscript{f}\subscript{M}, P\superscript{o}\subscript{K})$ where $P\superscript{f}\subscript{M}=P\superscript{o}\subscript{M}+P_\Delta$. Given the same $P\subscript{K}$, it can be straightforwardly shown that $\varepsilon\subscript{Bob,M}$ and $\varepsilon\subscript{Eve,M}$ are monotonically decreasing 
		in $P\subscript{M}$ with:
  \begin{equation}
  \begin{split}
  \label{eq:dR_d_dP_M}
      \frac{\partial R\subscript{d}}{\partial P\subscript{M}}
      &=-\frac{\partial \varepsilon\subscript{Bob,M}}{\partial P\subscript{M}}(1-\varepsilon\subscript{Eve,M})\varepsilon\subscript{Eve,K}\\
      &~~~~~~~~-\frac{\partial \varepsilon\subscript{Eve,M}}{\partial P\subscript{M}}
      (1-(1-\varepsilon\subscript{Bob,M}))\varepsilon\subscript{Bob,K}\\
      &\geqslant 0.
  \end{split}
  \end{equation}
The inequality holds since we have the following derivative based on the chain rule:
\begin{equation}\label{eq:decreasing_errors}
    \begin{split}
        \frac{\partial \varepsilon_{i,\mathrm{M}}}{\partial P\subscript{M}}&=
        \frac{\partial \varepsilon_{i,\mathrm{M}}}{\partial \omega_{i,\mathrm{M}}}
        \frac{\partial \omega_{i,\mathrm{M}}}{\partial \gamma_{i,\mathrm{M}}}
        \frac{\partial \gamma_{i,\mathrm{M}}}{\partial P\subscript{M}}
        \leqslant 0,
    \end{split}
\end{equation}
where $\omega_{i,j}=\sqrt{\frac{n}{V(\gamma_{i,j})}}(\mathcal{C}(\gamma_{i,j})-\frac{d_j}{n})$ is a auxiliary function. Therefore, it always holds that 
		\begin{align}
			\varepsilon\subscript{Bob,M}(P\superscript{o}\subscript{M})&>\varepsilon\subscript{Bob,M}\left(P\superscript{f}\subscript{M}\right),\\
			\varepsilon\subscript{Eve,M}(P\superscript{o}\subscript{M})&>\varepsilon\subscript{Eve,M}\left(P\superscript{f}\subscript{M}\right)
		\end{align}
	 	Then, we have:
		\begin{equation}
			\begin{split}	&R\subscript{d}\left(P\superscript{f}\subscript{M}, P\superscript{o}\subscript{K}\right)-R\subscript{d}\left(P\superscript{o}\subscript{M}, P\superscript{o}\subscript{K}\right)\\
				=&\left[1-2\varepsilon\subscript{Bob,K}\left(P\superscript{o}\subscript{K}\right)\right]\left[\varepsilon\subscript{Bob,M}\left(P\superscript{o}\subscript{M}\right)-\varepsilon\subscript{Bob,M}\left(P\superscript{f}\subscript{M}\right)\right]\\
				&-\left[1-2\varepsilon\subscript{Eve,K}\left(P\superscript{o}\subscript{K}\right)\right]\left[\varepsilon\subscript{Eve,M}\left(P\superscript{o}\subscript{M}\right)-\varepsilon\subscript{Eve,M}\left(P\superscript{f}\subscript{M}\right)\right]\\
				>&0.
			\end{split}
		\end{equation}
		The inequality above holds, since $2\varepsilon\subscript{Bob,K}\leqslant 2\varepsilon\superscript{th}\subscript{Bob,K}< 1$ and $2\varepsilon\subscript{Eve,K}\geqslant 2\varepsilon\superscript{th}\subscript{Eve,K}> 1$. In other words, the solution $P\superscript{f}\subscript{K}$ and $P\superscript{f}\subscript{M}$ achieves a better deception rate  $R\subscript{d}\left(P\superscript{f}\subscript{M},P\superscript{o}\subscript{K}\right)$ than $R\subscript{d}\left(P\superscript{o}\subscript{M},P\superscript{o}\subscript{K}\right)$, which violates the assumption of optimum.
	\end{proof}
\section{Proof of Lemma~\ref{lemma:convexity}}
\label{proof:approximation}
\begin{proof}
    First, we introduce a constant $\lambda(\hat{d}\subscript{K}^{(q)},\hat{P}\subscript{M}^{(q)})=\frac{(1-\varepsilon\subscript{Bob,D}(\hat{d}\subscript{K}^{(q)},\hat{P}\subscript{M}^{(q)}))}{\varepsilon\subscript{Eve,D}((\hat{d}\subscript{K}^{(q)},\hat{P}\subscript{M}^{(q)}))}$ at the local point $(\hat{d}\subscript{K}^{(q)},\hat{P}\subscript{M}^{(q)})$. Since $0\leqslant \varepsilon_{i,\mathrm{D}} \leqslant 1$, it is trivial to show that $\lambda$ is always non-negative. Then, we have:
    \begin{equation}
        R\subscript{d}=\frac{\lambda}{\lambda}(1-\varepsilon\subscript{Bob,D})\varepsilon\subscript{Eve,D}=\frac{1}{\lambda} (1-\varepsilon\subscript{Bob,D})\cdot \lambda\varepsilon\subscript{Eve,D}
    \end{equation}
    Then, based on the inequality of arithmetic and geometric means, we can reconstruct the upper-bound of $R\subscript{d}$ as:
    \begin{equation}
        \begin{split}
           &\sqrt{\lambda R\subscript{d}}\leqslant \frac{(1-\varepsilon\subscript{Bob,D})+\lambda\varepsilon\subscript{Eve,D}}{2}\\
           \Longleftrightarrow &\lambda R\subscript{d}\leqslant 
            \left(
                \frac{(1-\varepsilon\subscript{Bob,D})+\lambda\varepsilon\subscript{Eve,D}}{2}
            \right)^2\\
            \Longleftrightarrow & R\subscript{d}\leqslant\frac{1}{4 \lambda }            \left(
                (1-\varepsilon\subscript{Bob,D})+\lambda\varepsilon\subscript{Eve,D}
            \right)^2\triangleq \hat{R}\subscript{d},
        \end{split}
    \end{equation}
    which completes the proof. 
\end{proof}

\section{Proof of Lemma~\ref{lemma:convex_d_K}}

\begin{proof}
\label{proof:convex_d_K}
We start with  the objective function $\hat{R}_d$. 
To prove its convexity, we first investigate the monotonicity of  $\varepsilon_{i,j}$ with respect to $d\subscript{K}$. In particular, we have
    \begin{equation}
        \frac{\partial \varepsilon_{i,\mathrm{M}}
    }{\partial d\subscript{K}}=0,
\end{equation}
and
    \begin{equation}
    \begin{split}
            \frac{\partial \varepsilon_{i,\mathrm{K}}
        }{\partial d\subscript{K}}&=\frac{\partial \varepsilon_{i,\mathrm{K}}}{\partial \omega_{i,\mathrm{K}}}\frac{\partial \omega_{i,\mathrm{K}}}{\partial d\subscript{K}}\\
        &=e^{-\frac{\omega^2_{i,K}}{2}}\sqrt{\frac{n}{V_{i,K}}}\cdot\frac{1}{n}\geqslant 0.
    \end{split}
\end{equation}
Therefore, $\varepsilon\subscript{Bob,K}$ and $\varepsilon\subscript{Eve,K}$ are monotonically increasing in $d\subscript{K}$.
Then, we further investigate the convexity of $\varepsilon_{i,j}$. Their second derivatives are: 
            \begin{equation}
        \frac{\partial^2 \varepsilon_{i,\mathrm{M}}
    }{\partial^2 d\subscript{K}}=0,
    \end{equation}
    and
    \begin{equation}
    \begin{split}
        \frac{\partial^2 \varepsilon_{i,\mathrm{K}}
    }{\partial^2 d\subscript{K}}
    &=\frac{\partial^2 \varepsilon_{i,\mathrm{K}}}{\partial \omega_{i,\mathrm{K}}^2}
    \underbrace{\left(
        \frac{\partial \omega_{i,\mathrm{K}}}{\partial d\subscript{K}}
    \right)^2}_{\geqslant 0}
    +\underbrace{\frac{\partial \varepsilon_{i,\mathrm{K}}}{\partial \omega_{i,\mathrm{K}}}
        \frac{\partial^2 \omega_{i,\mathrm{K}}}{\partial d\subscript{K}^2}}_{=0}.
    \end{split}
\end{equation}
Note that $\varepsilon_{i,j}(\omega_{i,j})$ is a Q-function, which is convex if $\omega_{i,j}$ is non-negative and concave if it is non-positive. 
It indicates that $\varepsilon_{i,\mathrm{K}}$ is convex in $d\subscript{K}$ if $\omega_{i,\mathrm{K}}\geqslant 0$ while being concave if $\omega_{i,\mathrm{K}}\leqslant 0$. 
Recall that the transmission must fulfill $\varepsilon\subscript{Bob,K}\leqslant \varepsilon\subscript{Bob,K}\superscript{th}\leqslant 0.5$ and $\varepsilon\subscript{Eve,K}\geqslant \varepsilon\subscript{Eve,K}\superscript{th} \geqslant 0.5$. Therefore, for any feasible $d\subscript{K}$, it must hold that 
\begin{equation}
    \omega\subscript{Bob,K}=\sqrt{\frac{n}{V(\gamma\subscript{Bob,K}}}(\mathcal{C}(\gamma\subscript{Bob,K}-\frac{d\subscript{K}}{n})\geqslant Q^{-1}(\varepsilon\subscript{Bob,K}\superscript{th})\geqslant 0,
\end{equation}
and
\begin{figure*}[!htbp]
    \setcounter{equation}{\theequation+3}
	\begin{equation}
\label{eq:d2e_dP2}
		\begin{split}
			&\frac{\partial^2 \varepsilon_{i,\mathrm{M}}}{\partial P\subscript{M}^2}
			=
			\frac{
				\partial \varepsilon_{i,\mathrm{M}}
			}
			{
				\partial \gamma_{i,\mathrm{M}}
			}
			\frac{
				\partial^2 \gamma_{i,\mathrm{M}}
			}
			{
				\partial P\subscript{M}^2
			}
			+\frac{
				\partial^2 \varepsilon_{i,\mathrm{M}}
			}
			{
				\partial \gamma^2_{i,\mathrm{M}}
			}
			\left(
			\frac{
				\partial \gamma_{i,\mathrm{M}}
			}
			{
				\partial P\subscript{M}
			}
			\right)^2\\
			=&\frac{1}{\sqrt{2\pi}}e^{-\frac{\omega_{i,\mathrm{M}}}{2}}
			\left(
			\frac{\partial \omega_{i,\mathrm{M}}}{\partial \gamma_{i,\mathrm{M}}}
			\left(
			\omega_{i,\mathrm{M}}
			\frac{\partial \omega_{i,\mathrm{M}}}{\partial \gamma_{i,\mathrm{M}}}
			\left(
			\frac{\partial \gamma_{i,\mathrm{M}}}{\partial P\subscript{M}}        
			\right)^2
			-\frac{\partial^2 \gamma_{i,\mathrm{M}}}{\partial P\subscript{M}^2}
			\right)
			-\underbrace{
				\frac{\partial^2 \omega_{i,\mathrm{M}}}{\partial \gamma^2_{i,\mathrm{M}}}
				\left(
				\frac{\partial \gamma_{i,\mathrm{M}}}{\partial P\subscript{M}}
				\right)^2
			}_{\leqslant 0}
			\right)\\
			\geqslant&\frac{1}{\sqrt{2\pi}}e^{-\frac{\omega_{i,\mathrm{M}}}{2}}
			\frac{\partial \omega_{i,\mathrm{M}}}{\partial \gamma_{i,\mathrm{M}}}
			\left(
			\omega_{i,\mathrm{M}}
			\frac{\partial \omega_{i,\mathrm{M}}}{\partial \gamma_{i,\mathrm{M}}}
			\left(
			\frac{\partial \gamma_{i,\mathrm{M}}}{\partial P\subscript{M}}        
			\right)^2
			-\frac{\partial^2 \gamma_{i,\mathrm{M}}}{\partial P\subscript{M}^2}
			\right)\\
			=&\frac{1}{\sqrt{2\pi}}e^{-\frac{\omega_{i,\mathrm{M}}}{2}}
			\frac{\partial \omega_{i,\mathrm{M}}}{\partial \gamma_{i,\mathrm{M}}}
			\left(
			\omega_{i,\mathrm{M}}
			\frac{\partial \omega_{i,\mathrm{M}}}{\partial \gamma_{i,\mathrm{M}}}
			\frac{
				z^2_1(z_iP_\Sigma+\sigma^2_i)^2
			}{
				(z_iP\subscript{K}+\sigma^2_i)^4
			}
			-\frac{
				2z_i^2(z_iP_\Sigma+\sigma^2_i)
			}{
				(z_iP\subscript{K}+\sigma^2_i)^3
			}
			\right)\\
			=&\frac{1}{\sqrt{2\pi}}e^{-\frac{\omega_{i,\mathrm{M}}}{2}}
			\frac{\partial \omega_{i,\mathrm{M}}}{\partial \gamma_{i,\mathrm{M}}}
			\frac{z_i^2(z_iP_\Sigma+\sigma^2_i)}{(z_iP\subscript{K}+\sigma^2_i)^3}
			\left(
			\omega_{i,\mathrm{M}}
			\frac{\partial \omega_{i,\mathrm{M}}}{\partial \gamma_{i,\mathrm{M}}}
			\underbrace{
				\frac{
					(z_iP_\Sigma+\sigma^2_i)
				}{
					(z_iP\subscript{K}+\sigma^2_i)
				}
			}_{\geqslant \gamma_{i,\mathrm{M}}}
			-2
			\right)\\
			\geqslant&\frac{1}{\sqrt{2\pi}}e^{-\frac{\omega_{i,\mathrm{M}}}{2}}
			\frac{\partial \omega_{i,\mathrm{M}}}{\partial \gamma_{i,\mathrm{M}}}
			\frac{z_i^2(z_iP_\Sigma+\sigma^2_i)}{(z_iP\subscript{K}+\sigma^2_i)^3}
			\left(
			\omega_{i,\mathrm{M}}
			\frac{\partial \omega_{i,\mathrm{M}}}{\partial \gamma_{i,\mathrm{M}}}
			\gamma_{i,\mathrm{M}}-2
			\right)\geqslant\frac{6.25\gamma_{i,\mathrm{M}}}{\sqrt{\gamma_{i,\mathrm{M}}(\gamma_{i,\mathrm{M}}+1)}}-2\geqslant 0,
		\end{split}
	\end{equation}%
\hrule%
\end{figure*}
\begin{equation}
	\setcounter{equation}{\theequation-4}
    \omega\subscript{Eve,K}=\sqrt{\frac{n}{V(\gamma\subscript{Eve,K}}}(\mathcal{C}(\gamma\subscript{Eve,K}-\frac{d\subscript{K}}{n})\leqslant Q^{-1}(\varepsilon\subscript{Eve,K}\superscript{th})\leqslant 0,
\end{equation}
where $Q^{-1}(\cdot)$ is the inverse Q-function. Therefore, $\varepsilon\subscript{Bob,K}$ is a convex and decreasing function while $\varepsilon\subscript{Eve,K}$ being a concave and decreasing function.  
Recall that $\hat{R}_d$  is the quadratic function of $\frac{1}{R_b}$, $\frac{1}{1-\varepsilon\subscript{Eve,K}}$ and $\frac{1}{\varepsilon\subscript{Eve,M}}$ according to~\eqref{eq:R_app}. Then, $\hat{R}_d$ is convex, if each of the components, i.e., $R_b$, $(1-\varepsilon\subscript{Eve,M})$ and $\varepsilon\subscript{Eve,K}$, is concave and non-negative. Clearly, both this is true for $(1-\varepsilon\subscript{Eve,M})$ and $\varepsilon\subscript{Eve,K}$. Therefore, we focus on the concavity of $R_b$ with its second derivative, i.e.,  
\begin{equation}
    \begin{split}
        \frac{\partial^2 R_b}{\partial d\subscript{K}^2}=-(1-\varepsilon\subscript{Bob,M})\underbrace{\frac{\partial^2 \varepsilon\subscript{Bob,K}}{\partial d\subscript{K}^2}}_{\geqslant 0}-\underbrace{\frac{\partial \varepsilon\subscript{Bob,K}}{\partial d\subscript{K}}}_{\geqslant 0}\leqslant 0.
    \end{split}
\end{equation}
    Hence, $R_b$ is indeed concave, i.e., $\hat{R}\subscript{d}$ is convex. It is also trivial to show that all the constraints are either convex or linear, i.e., the feasible set of Problem~\eqref{Problem_SCA_BCD_d_K} is convex. Then, since the objective function to be maximized is concave and its feasible set is convex, Problem~\eqref{Problem_SCA_BCD_d_K} is a convex problem. 
\end{proof}

\section{Proof of Theorem~\ref{theorem:convex_P_M}}
\label{proof:convex_P_M}
\begin{proof}

Similar to the proof of Lemma~\ref{lemma:convex_d_K}, we start with the convexity of the objective function $\hat{R}_d$. We also first investigate the monotonicity of $\varepsilon_{i,M}$ with respect $P\subscript{M}$ as follows:
    \begin{equation}
    \label{eq:de_dP}
			\begin{split}
				\frac{\partial \varepsilon_{i,\mathrm{M}}}{\partial P\subscript{M}}
				&=\frac{\partial \varepsilon_{i,\mathrm{M}}}{\partial \gamma_{i,\mathrm{M}}}
                \frac{\partial \gamma_{i,\mathrm{M}}}{\partial P\subscript{M}}
                =\underbrace{\frac{\partial \varepsilon_{i,\mathrm{M}}}{\partial \gamma_{i,\mathrm{M}}}}_{\leqslant 0}\frac{2z_i^2(z_iP_\Sigma+\sigma^2)}{(z_iP\subscript{K}+\sigma^2)^3}\leqslant 0.
			\end{split}
    \end{equation}
    \setcounter{equation}{\theequation+1}
Note that the above inequality holds with the constraints~\eqref{con:err_bob_message_threshold} and~\eqref{con:err_eve_message_threshold}, i.e., $\omega_{i,M}\geqslant 0$. Moreover, since $n\geqslant 10$ and $\varepsilon_{i,\mathrm{M}}\leqslant \varepsilon\superscript{th}_{i,\mathrm{M}}<0.5$, we can proven that $\varepsilon_{i,M}$ is convex in $P\subscript{M}$ with ~\eqref{eq:de_dP} and ~\eqref{eq:d2e_dP2}.  Note that the inequality in~\eqref{eq:d2e_dP2} holds with $\gamma_{i,M}\geqslant \gamma_{th}\geqslant 1$, which is required to fulfill the error probability constraints in practical scenarios~\cite{ZHY+2023joint}.  

Similarly, we can show that $\varepsilon\subscript{Bob,K}$ is convex while $\varepsilon\subscript{Eve,K}$ being concave with\vspace{-5pt} 
\begin{equation}
    \frac{\partial \gamma_{i,\mathrm{K}}}{\partial P\subscript{M}}=-\frac{z_{i}}{\sigma^2},~
    \frac{\partial^{2}\gamma_{i,\mathrm{K}}}{\partial P\subscript{M}^{2}}=0.
    \vspace{-5pt}
\end{equation}

To avoid repetition, we omit the details. Note that the convexity/concavity of $\varepsilon\subscript{Bob,K}$ differs from the ones of $\varepsilon\subscript{Eve,K}$. This is due to the fact that $\omega\subscript{Bob,K}\geqslant 0$ and $\omega\subscript{Eve,K}\leqslant 0$ according to the constraints~\eqref{con:err_bob_key_threshold} and~\eqref{con:err_eve_key_threshold}. 

With the above results, it is clear that $\frac{1}{\varepsilon\subscript{Eve,K}}$ and $\frac{1}{1-\varepsilon\subscript{Eve,M}}$ are both convex in $P\subscript{M}$, since a convex and decreasing function composed with a concave function is also convex~\cite{boyd2004convex}. However, we still need to determine the convexity of $R_b$. In particular, its second-order  derivative is:
\begin{equation}
\begin{split}
\frac{\partial^2R_{b}}{\partial P\subscript{M}^2}  
=&-\frac{\partial^2\varepsilon\subscript{Bob,K}}{\partial P\subscript{M}^2}+\frac{\partial^2\varepsilon\subscript{Bob,M}}{\partial P\subscript{M}^2}\varepsilon\subscript{Bob,K}+\frac{\partial^2\varepsilon\subscript{Bob,K}}{\partial P\subscript{M}^2}\varepsilon\subscript{Bob,M}\\&+\frac{\partial\varepsilon\subscript{Bob,K}}{\partial P\subscript{M}}\frac{\varepsilon\subscript{Bob,M}}{\partial P\subscript{M}}\\  
 =&-(1-\varepsilon\subscript{Bob,M})\underbrace{\frac{\partial^2\varepsilon\subscript{Bob,K}}{\partial P\subscript{M}^2}}_{\geqslant 0}+\underbrace{\frac{\partial^2\varepsilon\subscript{Bob,M}}{\partial P\subscript{M}^2}}_{\geqslant 0}\varepsilon\subscript{Bob,K}\\
 &+\underbrace{\frac{\partial\varepsilon\subscript{Bob,K}}{\partial P\subscript{M}}}_{\geqslant 0}
 \underbrace{\frac{\varepsilon\subscript{Bob,M}}{\partial P\subscript{M}}}_{\leqslant 0}
 \leqslant \frac{\partial^2\varepsilon\subscript{Bob,M}}{\partial P\subscript{M}^2} -\frac{\partial^2\varepsilon\subscript{Bob,K}}{\partial P\subscript{M}^2}.
\end{split}
\end{equation}

Since the feasible $P\subscript{M}$ must fulfill that $\varepsilon_{\text{Bob},j}\leqslant \varepsilon\superscript{th}_{\text{Bob},j}\leqslant 0.1$ and $P\subscript{K}=P_{\sum}-P\subscript{M}$, we have $\varepsilon\subscript{Bob,K}\leqslant 10(1-\varepsilon\subscript{Bob,M})$, $P\subscript{M}\geqslant P\subscript{K}$ and $\gamma\subscript{Bob,M}\leqslant \gamma\subscript{Bob,K}$. Therefore, it also indicates that $\omega(\gamma\subscript{Bob,M},d\subscript{M})\leqslant \omega(\gamma\subscript{Bob,M},d\subscript{K})\leqslant \omega(\gamma\subscript{Bob,K},d\subscript{K})$. Note that both $\varepsilon\subscript{Bob,M}$ and $\varepsilon\subscript{Bob,K}$ are error probability, which is characterized by the Q-function according to~\eqref{eq:FBL_error}. Then, let denote $P_{0.5}$ the transmit power that achieves $\varepsilon\subscript{Bob,K}(P_{0.5})=0.5$ while $P_{\infty}$ the transmit power that achieves $\varepsilon\subscript{Bob,K}(P_{\infty})=0$. It holds that  
\begin{equation}
    \frac{\partial^2 \varepsilon\subscript{Bob,M}}{\partial P_{0.5}^2}\leqslant \frac{\partial^2 \varepsilon\subscript{Bob,K}}{\partial P_{0.5}^2} = 0,~
    \frac{\partial^2 \varepsilon\subscript{Bob,M}}{\partial P_{\infty}^2} = \frac{\partial^2 \varepsilon\subscript{Bob,K}}{\partial P_{\infty}^2} > 0,
\end{equation}

Therefore, within the feasible set of Problem~\eqref{Problem_SCA_BCD_P_M}, we have $\frac{\partial^2 R_b}{\partial P\subscript{M}^2}\leqslant \frac{\partial^2\varepsilon\subscript{Bob,M}}{\partial P\subscript{M}^2} -\frac{\partial^2\varepsilon\subscript{Bob,K}}{\partial P\subscript{M}^2}\leqslant 0$.

Since all components of $\hat{R}_d$ are convex, the objective function $\hat{R}_d$ is also convex. Moreover, it is trivial to show that all constraints are either convex or affine. As a result, Problem~\eqref{Problem_SCA_BCD_P_M} is convex.  

\end{proof}

\end{document}